\providecommand{\U}[1]{\protect\rule{.1in}{.1in}}
\newtheorem{theorem}{Theorem}[section]
\newtheorem{conjecture}[theorem]{Conjecture}
\newtheorem{corollary}[theorem]{Corollary}
\newtheorem{lemma}[theorem]{Lemma}
\newtheorem{problem}[theorem]{Problem}
\newtheorem{proposition}[theorem]{Proposition}
\newtheorem{remark}[theorem]{Remark}
\newenvironment{proof}[1][Proof]{\noindent\textbf{#1.} }{\ \rule{0.5em}{0.5em}}
\begin{document}

\title{Critical and Maximum Independent Sets of a Graph}
\author{Adi Jarden\\Department of Mathematics and Computer Science\\Ariel University, Israel\\jardena@ariel.ac.il
\and Vadim E. Levit\\Department of Mathematics and Computer Science\\Ariel University, Israel\\levitv@ariel.ac.il
\and Eugen Mandrescu\\Department of Computer Science\\Holon Institute of Technology, Israel\\eugen\_m@hit.ac.il}
\date{}
\maketitle

\begin{abstract}
Let $G$ be a simple graph with vertex set $V\left(  G\right)  $. A set
$S\subseteq V\left(  G\right)  $ is \textit{independent} if no two vertices
from $S$ are adjacent. By $\mathrm{Ind}(G)$ we mean the family of all
independent sets of $G$, while $\mathrm{core}\left(  G\right)  $ and
$\mathrm{corona}\left(  G\right)  $ denote the intersection and the union of
all maximum independent sets, respectively.

The number $d\left(  X\right)  =$ $\left\vert X\right\vert -\left\vert
N(X)\right\vert $ is the \textit{difference} of $X\subseteq V\left(  G\right)
$, and a set $A\in\mathrm{Ind}(G)$ is \textit{critical} if $d(A)=\max
\{d\left(  I\right)  :I\in\mathrm{Ind}(G)\}$ \cite{Zhang1990}.

Let $\mathrm{\ker}(G)$ and $\mathrm{diadem}(G)$ be the intersection and union,
respectively, of all critical independent sets of $G$ \cite{LevMan2012a}.

In this paper, we present various connections between critical unions and
intersections of maximum independent sets of a graph. These relations give
birth to new characterizations of K\"{o}nig-Egerv\'{a}ry graphs, some of them
involving $\mathrm{\ker}(G)$, $\mathrm{core}\left(  G\right)  $,
$\mathrm{corona}\left(  G\right)  $, and $\mathrm{diadem}(G)$.

\textbf{Keywords:} maximum independent set, maximum critical set, ker, core,
corona, diadem, maximum matching, K\"{o}nig-Egerv\'{a}ry graph.

\end{abstract}

\section{Introduction}

Throughout this paper $G$ is a finite simple graph with vertex set $V(G)$ and
edge set $E(G)$. If $X\subseteq V\left(  G\right)  $, then $G[X]$ is the
subgraph of $G$ induced by $X$. By $G-W$ we mean either the subgraph
$G[V\left(  G\right)  -W]$, if $W\subseteq V(G)$, or the subgraph obtained by
deleting the edge set $W$, for $W\subseteq E(G)$. In either case, we use
$G-w$, whenever $W$ $=\{w\}$. If $A,B$ $\subseteq V\left(  G\right)  $, then
$(A,B)$ stands for the set $\{ab:a\in A,b\in B,ab\in E\left(  G\right)  \}$.

The \textit{neighborhood} $N(v)$ of $v\in V\left(  G\right)  $ is the set
$\{w:w\in V\left(  G\right)  $ \textit{and} $vw\in E\left(  G\right)  \}$. In
order to avoid ambiguity, we use also $N_{G}(v)$ instead of $N(v)$. The
\textit{neighborhood} $N(A)$ of $A\subseteq V\left(  G\right)  $ is $\{v\in
V\left(  G\right)  :N(v)\cap A\neq\emptyset\}$, and $N[A]=N(A)\cup A$.

A set $S\subseteq V(G)$ is \textit{independent} if no two vertices from $S$
are adjacent, and by $\mathrm{Ind}(G)$ we mean the family of all the
independent sets of $G$. An independent set of maximum size is a
\textit{maximum independent set} of $G$, and $\alpha(G)=\max\{\left\vert
S\right\vert :S\in\mathrm{Ind}(G)\}$.

\begin{theorem}
\label{Berge}\cite{Berge1981}, \cite{Berge1985} An independent set $X$ is
maximum if and only if every independent set $S$ disjoint from $X$ can be
matched into $X$.
\end{theorem}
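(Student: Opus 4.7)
The plan is to prove the biconditional in two parts, invoking Hall's marriage theorem for the forward direction and a direct cardinality argument for the reverse direction.

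For the forward direction, assume $X$ is a maximum independent set and let $S$ be an independent set disjoint from $X$. I would apply Hall's theorem to produce an injection $f:S\to X$ with $sf(s)\in E(G)$. To verify Hall's condition, I would argue by contradiction: if some $T\subseteq S$ satisfies $|N(T)\cap X|<|T|$, then form $X'=(X\setminus N(T))\cup T$. The set $X'$ is independent because $X\setminus N(T)$ and $T$ are each independent (as subsets of $X$ and $S$ respectively), and no edge can join them—any such edge would place a vertex of $X\setminus N(T)$ into $N(T)$. Moreover, $|X'|=|X|-|N(T)\cap X|+|T|>|X|$, contradicting the maximality of $X$. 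Hall's condition therefore holds on every $T\subseteq S$, and the desired matching exists.

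For the reverse direction, suppose $X$ enjoys the matching property but some independent set $Y$ has $|Y|>|X|$. Let $S=Y\setminus X$; then $S$ is independent and disjoint from $X$, so by hypothesis there is an injection $f:S\to X$ with $sf(s)\in E(G)$ for every $s\in S$. Because $Y$ is independent and each $s\in S\subseteq Y$ is adjacent to $f(s)$, no $f(s)$ can belong to $Y$, whence $f(S)\subseteq X\setminus Y$. Thus $|X\setminus Y|\geq|f(S)|=|S|=|Y\setminus X|$, which rearranges to $|X|\geq|Y|$, contradicting $|Y|>|X|$. Hence $X$ is maximum.

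The main obstacle is the forward direction, specifically the recognition that a failure of Hall's condition corresponds exactly to an exchange producing a larger independent set. Verifying that $X'=(X\setminus N(T))\cup T$ is independent is where the hypothesis $S\cap X=\emptyset$ is essential: it guarantees the two pieces of $X'$ are disjoint, while the excision of $N(T)$ severs every possible edge between them. The reverse direction is comparatively routine once one notices that the matching forces $f(S)$ into $X\setminus Y$.
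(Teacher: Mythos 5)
Your proof is correct: the Hall-condition exchange $X'=(X\setminus N(T))\cup T$ for the forward direction and the observation that $f(Y\setminus X)\subseteq X\setminus Y$ for the converse are both sound, and together they establish the equivalence. Note that the paper itself offers no proof of this statement --- it is quoted as a known theorem of Berge --- so there is nothing internal to compare against; your argument is the standard one and would serve as a complete, self-contained justification.
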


For a graph $G$, let $\Omega(G)$ denote the family of all its maximum
independent sets, $\mathrm{core}(G)=%
{\displaystyle\bigcap}
\{S:S\in\Omega(G)\}$ \cite{LevMan2002a}, and $\mathrm{corona}(G)=%
{\displaystyle\bigcup}
\{S:S\in\Omega(G)\}$ \cite{BorosGolLev}.

It is clear that $N\left(  \mathrm{core}(G)\right)  \subseteq$ $V\left(
G\right)  -\mathrm{corona}(G)$, and there exist graphs satisfying $N\left(
\mathrm{core}(G)\right)  \neq$ $V\left(  G\right)  -\mathrm{corona}(G)$ (for
some examples, see the graphs from Figure \ref{fig101101}, where
\textrm{core}$(G_{1})=\{a,b\}$ and \textrm{core}$(G_{2})=\{x,y,z\}$).

The problem of whether $\mathrm{core}(G)\neq\emptyset$ is \textbf{NP}-hard
\cite{BorosGolLev}.\begin{figure}[h]
\setlength{\unitlength}{1.0cm} \begin{picture}(5,1.9)\thicklines
\multiput(2,0.5)(1,0){5}{\circle*{0.29}}
\multiput(2,1.5)(1,0){5}{\circle*{0.29}}
\put(2,0.5){\line(1,0){4}}
\put(2,1.5){\line(1,-1){1}}
\put(3,1.5){\line(1,-1){1}}
\put(3,1.5){\line(1,0){1}}
\put(4,0.5){\line(0,1){1}}
\put(4,0.5){\line(1,1){1}}
\put(5,1.5){\line(1,0){1}}
\put(6,0.5){\line(0,1){1}}
\put(2,0.1){\makebox(0,0){$a$}}
\put(2,1.8){\makebox(0,0){$b$}}
\put(3,0.1){\makebox(0,0){$c$}}
\put(4,0.1){\makebox(0,0){$d$}}
\put(5,0.1){\makebox(0,0){$e$}}
\put(6,0.1){\makebox(0,0){$f$}}
\put(3,1.8){\makebox(0,0){$x$}}
\put(4,1.8){\makebox(0,0){$y$}}
\put(5,1.8){\makebox(0,0){$u$}}
\put(6,1.8){\makebox(0,0){$v$}}
\put(1.3,1){\makebox(0,0){$G_{1}$}}
\multiput(8,0.5)(1,0){6}{\circle*{0.29}}
\multiput(8,1.5)(1,0){6}{\circle*{0.29}}
\put(8,0.5){\line(1,1){1}}
\put(8,1.5){\line(1,0){1}}
\put(9,0.5){\line(1,0){4}}
\put(9,0.5){\line(0,1){1}}
\put(10,0.5){\line(0,1){1}}
\put(10,1.5){\line(1,0){1}}
\put(11,1.5){\line(1,-1){1}}
\put(12,1.5){\line(1,0){1}}
\put(12,1.5){\line(1,-1){1}}
\put(13,0.5){\line(0,1){1}}
\put(8,1.8){\makebox(0,0){$x$}}
\put(8,0.1){\makebox(0,0){$y$}}
\put(9,0.1){\makebox(0,0){$z$}}
\put(9,1.8){\makebox(0,0){$v_{1}$}}
\put(10,1.8){\makebox(0,0){$v_{2}$}}
\put(11,1.8){\makebox(0,0){$v_{3}$}}
\put(12,1.8){\makebox(0,0){$v_{7}$}}
\put(13,1.8){\makebox(0,0){$v_{8}$}}
\put(10,0.11){\makebox(0,0){$v_{6}$}}
\put(11,0.11){\makebox(0,0){$v_{5}$}}
\put(12,0.11){\makebox(0,0){$v_{4}$}}
\put(13,0.11){\makebox(0,0){$v_{9}$}}
\put(7.3,1){\makebox(0,0){$G_{2}$}}
\end{picture}\caption{$V(G_{1})=$ \textrm{corona}$(G_{1})\cup N\left(
\mathrm{core}(G_{1})\right)  \cup\{d\}$, $V(G_{2})=$ \textrm{corona}%
$(G_{2})\cup N\left(  \mathrm{core}(G_{2})\right)  $.}%
\label{fig101101}%
\end{figure}

A \textit{matching} is a set $M$ of pairwise non-incident edges of $G$. If
$A\subseteq V(G)$, then $M\left(  A\right)  $ is the set of all the vertices
matched by $M$ with vertices belonging to $A$. A matching of maximum
cardinality, denoted $\mu(G)$, is a \textit{maximum matching}.

\begin{lemma}
[Matching Lemma]\cite{LevManLemma2011} \label{MatchLem}If $A\in\mathrm{Ind}%
(G),\Lambda\subseteq\Omega(G)$, and $\left\vert \Lambda\right\vert \geq1$,
then there exists a matching from $A-%
{\displaystyle\bigcap}
\Lambda$ into $%
{\displaystyle\bigcup}
\Lambda-A$.
\end{lemma}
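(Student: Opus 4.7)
My plan is induction on $|\Lambda|$, with the heavy lifting done by Berge's theorem (Theorem~\ref{Berge}). For the base case $|\Lambda|=1$, write $\Lambda=\{S\}$; the claim asks for a matching from $A\setminus S$ into $S\setminus A$. Since $A\setminus S$ is independent and disjoint from the maximum independent set $S$, Berge furnishes a matching of $A\setminus S$ into $S$; because $A$ is itself independent, no edge of this matching can reach $A\cap S$, so the partners must lie in $S\setminus A$, as required.

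For the inductive step, split $\Lambda=\Lambda_0\cup\{S\}$ with $|\Lambda_0|\geq 1$, apply the inductive hypothesis to obtain a matching $M_0$ from $A-\bigcap\Lambda_0$ into $\bigcup\Lambda_0-A$, and apply the base case to $S$ to obtain a matching $M_S$ from $A\setminus S$ into $S\setminus A$. The task is to splice $M_0$ and $M_S$ into a matching saturating $A-\bigcap\Lambda = (A-\bigcap\Lambda_0)\cup(A\setminus S)$. Viewed as a subgraph of $G$, $M_0\cup M_S$ has maximum degree $2$, and since $A$ is independent its components are paths and even cycles alternating between $A$ and $V(G)\setminus A$. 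Cycles, and paths with at least one endpoint in $V(G)\setminus A$, admit an alternating sub-matching that covers every $A$-vertex in the component, and so are harmless.

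The main obstacle is a path $P$ whose two endpoints both lie in $A-\bigcap\Lambda$: then $P$ has one more $A$-vertex than $(V\setminus A)$-vertex, so its own edges cannot match both endpoints simultaneously. A parity argument on the alternation of $M_0$- and $M_S$-edges along $P$ pins these endpoints down: one, call it $u$, lies in $(A-\bigcap\Lambda_0)\cap S$ and is matched only by $M_0$, while the other lies in $(A\cap\bigcap\Lambda_0)\setminus S$ and is matched only by $M_S$. To cover $u$, I plan to exploit that $u\notin\bigcap\Lambda_0$: some $S'\in\Lambda_0$ misses $u$, and Berge applied to $S'$ hands $u$ a new partner $w\in S'\setminus A\subseteq\bigcup\Lambda-A$. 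The edge $uw$ can then be used to rewire $M_0\cup M_S$ along $P$, yielding a modified matching that covers both endpoints of $P$. Verifying that such rewirings eventually eliminate every bad path — rather than merely relocating the deficit elsewhere — will be the heart of the argument.
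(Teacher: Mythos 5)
The paper does not actually prove this lemma; it is quoted from \cite{LevManLemma2011} (``A set and collection lemma''), so there is no internal proof to compare against, and your attempt has to stand on its own. The base case is correct and complete: Berge's theorem matches the independent set $A\setminus S$ into $S$, and the independence of $A$ forces every partner into $S\setminus A$. Your structural analysis of $M_0\cup M_S$ is also sound: every edge of either matching has exactly one endpoint in $A$, so components alternate between $A$ and its complement, the edges along any path alternate between $M_0$ and $M_S$, every vertex of $A-\bigcap\Lambda$ is covered by at least one of the two matchings, and the parity bookkeeping correctly places the two endpoints of a bad path in $\left(A-\bigcap\Lambda_0\right)\cap S$ and $\left(A\cap\bigcap\Lambda_0\right)\setminus S$ respectively.

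The gap is exactly where you say it is, and it is not a routine verification --- it is the entire content of the inductive step. Applying Theorem \ref{Berge} to the single vertex $u$ and some $S'\in\Lambda_0$ with $u\notin S'$ produces one neighbour $w\in S'\setminus A$ with no control over whether $w$ already serves as a partner in $M_0$, in $M_S$, or in a repair already performed for another bad path; if it does, inserting $uw$ merely relocates the deficiency to another component, and nothing you have written excludes an endless cascade of such relocations, nor the terminal situation in which every neighbour of $u$ in $\bigcup\Lambda-A$ is already saturated. Closing this requires a genuinely global argument --- for example, verifying Hall's condition for all of $A-\bigcap\Lambda$ against $\bigcup\Lambda-A$ at once, or an alternating-path argument showing that a failed augmentation would contradict the maximality of some member of $\Lambda$. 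As it stands, that decisive step is announced rather than carried out, so the proposal is a plausible plan but not a proof.
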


For $X\subseteq V(G)$, the number $\left\vert X\right\vert -\left\vert
N(X)\right\vert $ is the \textit{difference} of $X$, denoted $d(X)$. The
\textit{critical difference} $d(G)$ is $\max\{d(X):X\subseteq V(G)\}$. The
number $\max\{d(I):I\in\mathrm{Ind}(G)\}$ is the \textit{critical independence
difference} of $G$, denoted $id(G)$. Clearly, $d(G)\geq id(G)$. It was shown
in \cite{Zhang1990} that $d(G)$ $=id(G)$ holds for every graph $G$. If $A$ is
an independent set in $G$ with $d\left(  X\right)  =id(G)$, then $A$ is a
\textit{critical independent set} \cite{Zhang1990}.

For example, consider the graph $G$ of Figure \ref{fig511}, where
$X=\{v_{1},v_{2},v_{3},v_{4}\}$ is a critical set, while $I=\{v_{1}%
,v_{2},v_{3},v_{6},v_{7}\}$ is a critical independent set. Other critical sets
are $\{v_{1},v_{2}\}$, $\{v_{1},v_{2},v_{3}\}$, $\{v_{1},v_{2},v_{3}%
,v_{4},v_{6},v_{7}\}$. \begin{figure}[h]
\setlength{\unitlength}{1cm}\begin{picture}(5,1.9)\thicklines
\multiput(6,0.5)(1,0){6}{\circle*{0.29}}
\multiput(5,1.5)(1,0){4}{\circle*{0.29}}
\multiput(4,0.5)(0,1){2}{\circle*{0.29}}
\put(10,1.5){\circle*{0.29}}
\put(4,0.5){\line(1,0){7}}
\put(4,1.5){\line(2,-1){2}}
\put(5,1.5){\line(1,-1){1}}
\put(5,1.5){\line(1,0){1}}
\put(6,0.5){\line(0,1){1}}
\put(6,0.5){\line(1,1){1}}
\put(7,1.5){\line(1,0){1}}
\put(8,0.5){\line(0,1){1}}
\put(10,0.5){\line(0,1){1}}
\put(10,1.5){\line(1,-1){1}}
\put(4,0.1){\makebox(0,0){$v_{1}$}}
\put(3.65,1.5){\makebox(0,0){$v_{2}$}}
\put(4.65,1.5){\makebox(0,0){$v_{3}$}}
\put(6.35,1.5){\makebox(0,0){$v_{4}$}}
\put(6,0.1){\makebox(0,0){$v_{5}$}}
\put(7,0.1){\makebox(0,0){$v_{6}$}}
\put(7,1.15){\makebox(0,0){$v_{7}$}}
\put(8,0.1){\makebox(0,0){$v_{9}$}}
\put(8.35,1.5){\makebox(0,0){$v_{8}$}}
\put(9.65,1.5){\makebox(0,0){$v_{11}$}}
\put(9,0.1){\makebox(0,0){$v_{10}$}}
\put(10,0.1){\makebox(0,0){$v_{12}$}}
\put(11,0.1){\makebox(0,0){$v_{13}$}}
\put(2.5,1){\makebox(0,0){$G$}}
\end{picture}\caption{\textrm{core}$(G)=\{v_{1},v_{2},v_{6},v_{10}\}$ is a
critical set, since $d\left(  \mathrm{core}(G)\right)  =1=d\left(  G\right)
$.}%
\label{fig511}%
\end{figure}

It is known that finding a maximum independent set is an \textbf{NP}-hard
problem \cite{GaryJohnson79}. Zhang proved that a critical independent set can
be found in polynomial time \cite{Zhang1990}.

\begin{theorem}
\label{th3}\cite{ButTruk2007} Each critical independent set is included in a
maximum independent set.
\end{theorem}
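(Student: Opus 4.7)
The plan is to take an arbitrary maximum independent set $S\in\Omega(G)$ and modify it by swapping out the vertices that conflict with $A$. Specifically, I would set
\[
T \;=\; (S\setminus N(A))\cup A.
\]
Independence of $T$ is immediate: $A$ is independent, $S\setminus N(A)$ is independent as a subset of $S$, and a cross edge would put a vertex of $S\setminus N(A)$ inside $N(A)$, contradicting its definition. Using $A\cap N(A)=\emptyset$ (since $A$ is independent), one checks that $A\cap(S\setminus N(A))=A\cap S$, so
\[
|T| \;=\; |S| \,-\, |S\cap N(A)| \,+\, |A\setminus S|.
\]
Thus proving $T\in\Omega(G)$ reduces to the single inequality $|A\setminus S|\geq |S\cap N(A)|$; combined with $|T|\leq |S|=\alpha(G)$ this forces equality and gives the desired maximum independent set containing $A$.

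The core of the argument is establishing $|A\setminus S|\geq |S\cap N(A)|$, and this is exactly where I would use that $A$ is critical. The key idea is to feed not $A$ itself but the sub-independent-set $A\cap S$ into the critical-difference comparison: since $A$ attains $\max\{d(I):I\in\mathrm{Ind}(G)\}$, we have $d(A\cap S)\leq d(A)$, which rearranges to
\[
|N(A)| - |N(A\cap S)| \;\leq\; |A|-|A\cap S| \;=\; |A\setminus S|.
\]
On the other hand, because $A\cap S\subseteq S$ and $S$ is independent, no vertex of $S$ is a neighbor of $A\cap S$, so $S\cap N(A\cap S)=\emptyset$. Since $N(A\cap S)\subseteq N(A)$, this yields $S\cap N(A)\subseteq N(A)\setminus N(A\cap S)$ and hence $|S\cap N(A)|\leq |N(A)|-|N(A\cap S)|$. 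Chaining the two bounds gives $|S\cap N(A)|\leq |A\setminus S|$, as required.

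The main obstacle I anticipate is precisely recognizing that criticality should be applied to $A\cap S$ rather than to $A$ itself; plugging in $A$ only produces a tautology, and plugging in $A\setminus S$ doesn't give a useful estimate on $N(A\cap S)$. Once one spots that the independence of $S$ kills $S\cap N(A\cap S)$ entirely, the counting is short. The Matching Lemma (Lemma \ref{MatchLem}), applied to $\Lambda=\{S\}$, in fact supplies the opposite inequality $|A\setminus S|\leq |S\cap N(A)|$, so as a by-product one obtains $|A\setminus S|=|S\cap N(A)|$; this is a sanity check but is not strictly needed for the conclusion.
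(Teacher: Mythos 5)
Your proposal is correct. Note that the paper itself offers no proof of this statement: Theorem \ref{th3} is quoted from \cite{ButTruk2007} as a known result, so there is no in-paper argument to compare against. Your exchange construction $T=(S\setminus N(A))\cup A$ is a valid, self-contained proof, and every step checks out: independence of $T$ is as you say; the count $\left\vert T\right\vert =\left\vert S\right\vert -\left\vert S\cap N(A)\right\vert +\left\vert A\setminus S\right\vert$ is right because $A\cap N(A)=\emptyset$; and the decisive inequality $\left\vert S\cap N(A)\right\vert \leq\left\vert A\setminus S\right\vert$ follows correctly by applying criticality to the independent set $A\cap S$ (giving $\left\vert N(A)\right\vert -\left\vert N(A\cap S)\right\vert \leq\left\vert A\setminus S\right\vert$) together with the observation that independence of $S$ forces $S\cap N(A\cap S)=\emptyset$, so $S\cap N(A)\subseteq N(A)\setminus N(A\cap S)$. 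This yields $\left\vert T\right\vert \geq\alpha(G)$, hence $T\in\Omega(G)$ and $A\subseteq T$. Your diagnosis of where criticality must enter --- comparing $d(A\cap S)$ with $d(A)$ rather than feeding in $A$ itself --- is exactly the non-trivial point, and it is essentially the augmentation argument of Butenko and Trukhanov in the cited source. The closing remark that Lemma \ref{MatchLem} (or Theorem \ref{Berge}) supplies the reverse inequality, so that in fact $\left\vert A\setminus S\right\vert =\left\vert S\cap N(A)\right\vert$, is accurate but, as you note, not needed.
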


Theorem \ref{th3} leads to an efficient way of approximating $\alpha(G)$
\cite{Truchanov2008}. Moreover, every critical independent set is contained in
a maximum critical independent set, and such a maximum critical independent
set can be found in polynomial time \cite{Larson2007}.

\begin{theorem}
\label{th2}\cite{Larson2007} There is a matching from $N(S)$ into $S$ for
every critical independent set $S$.
\end{theorem}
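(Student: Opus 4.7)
The plan is to prove the existence of the desired matching via Hall's marriage theorem applied to the bipartite graph on parts $N(S)$ and $S$ with edges inherited from $G$. Concretely, I will show that for every $T \subseteq N(S)$, the inequality $|N(T) \cap S| \geq |T|$ holds; Hall's theorem then furnishes a matching saturating $N(S)$ into $S$.

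I would argue by contradiction, assuming some $T \subseteq N(S)$ violates Hall's condition, i.e., $|N(T) \cap S| < |T|$. The key move is to exploit the criticality of $S$ by producing an independent set whose difference strictly exceeds $d(S) = id(G)$. The natural candidate is
\[
S' = S \setminus (N(T) \cap S),
\]
which is automatically independent as a subset of $S$. First I would check that no vertex of $T$ has any neighbor in $S'$: if $v \in T$, then by definition every neighbor of $v$ lying in $S$ belongs to $N(T) \cap S$ and has therefore been removed. Consequently $N(S') \subseteq N(S) \setminus T$, and so
\[
d(S') \;=\; |S'| - |N(S')| \;\geq\; \bigl(|S| - |N(T)\cap S|\bigr) - \bigl(|N(S)| - |T|\bigr) \;=\; d(S) + \bigl(|T| - |N(T)\cap S|\bigr).
\]
Under the Hall-failure assumption the last parenthesis is strictly positive, giving $d(S') > d(S) = id(G)$, which contradicts the maximality of $d(S)$ among independent sets.

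The main subtlety, rather than any computational obstacle, is the verification that the deletion $S \mapsto S \setminus (N(T) \cap S)$ simultaneously drops $|S|$ by only $|N(T) \cap S|$ while killing off \emph{all} of $T$ from the neighborhood of the surviving independent set; this is what forces the two terms in $d(S')$ to move in the right direction and is where Hall's condition feeds into the criticality of $S$. Once this bookkeeping is in place, the proof is complete, and it avoids any appeal to the structural results on $\mathrm{core}$, $\mathrm{corona}$, or the Matching Lemma, relying only on Hall's theorem and the definition of a critical independent set.
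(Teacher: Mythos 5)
Your argument is correct. Note that the paper does not prove this statement at all --- it is quoted from Larson's paper \cite{Larson2007} as a known result --- so there is no in-paper proof to compare against; your Hall-condition argument is the standard one, and every step checks out: for a violating $T\subseteq N(S)$ the set $S'=S\setminus(N(T)\cap S)$ is independent, satisfies $N(S')\subseteq N(S)\setminus T$ (since any neighbor in $S$ of a vertex of $T$ lies in $N(T)\cap S$ and has been deleted), and hence $d(S')\geq d(S)+|T|-|N(T)\cap S|>d(S)=id(G)$, contradicting criticality.
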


It is well-known that $\alpha(G)+\mu(G)\leq\left\vert V(G)\right\vert $ holds
for every graph $G$. Recall that if $\alpha(G)+\mu(G)=\left\vert
V(G)\right\vert $, then $G$ is a \textit{K\"{o}nig-Egerv\'{a}ry graph}
\cite{Deming1979,Sterboul1979}. For example, each bipartite graph is a
K\"{o}nig-Egerv\'{a}ry graph as well. Various properties of
K\"{o}nig-Egerv\'{a}ry graphs can be found in
\cite{Korach2006,levm4,LevMan2013b}. It turns out that K\"{o}nig-Egerv\'{a}ry
graphs are exactly the graphs having a critical maximum independent set
\cite{Larson2011}. In \cite{LevMan2012b} it was shown the following.

\begin{lemma}
\cite{LevMan2012b} \label{lem2} $d(G)=\alpha(G)-\mu(G)$ holds for each
K\"{o}nig-Egerv\'{a}ry graph $G$.
\end{lemma}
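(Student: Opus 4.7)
The plan is to use the characterization (attributed to Larson) that König-Egerváry graphs are precisely those admitting a critical maximum independent set, and then exploit Theorem~\ref{th2} together with the maximality of such a set.

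First, invoke the fact mentioned just before the lemma: if $G$ is König-Egerváry, then there exists a set $S$ which is simultaneously critical and a maximum independent set. Since $S$ is critical, $d(S) = id(G) = d(G)$, so by definition $d(G) = |S| - |N(S)| = \alpha(G) - |N(S)|$. Thus the entire task reduces to showing $|N(S)| = \mu(G)$.

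For the inequality $|N(S)| \le \mu(G)$, apply Theorem~\ref{th2}: since $S$ is critical, there is a matching from $N(S)$ into $S$, hence $|N(S)| \le \mu(G)$. For the reverse inequality, I would use the maximality of $S$ as an independent set: any vertex $v \in V(G) \setminus S$ must have a neighbor in $S$, otherwise $S \cup \{v\}$ would be independent and strictly larger than $S$. Combined with $N(S) \cap S = \emptyset$ (since $S$ is independent), this gives $N(S) = V(G) \setminus S$, and the König-Egerváry identity $\alpha(G) + \mu(G) = |V(G)|$ yields $|N(S)| = |V(G)| - \alpha(G) = \mu(G)$.

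Putting the two bounds together, $|N(S)| = \mu(G)$, and substituting back gives $d(G) = \alpha(G) - \mu(G)$, as desired. The only step that is not completely mechanical is the very first one: one must know (or reprove) Larson's characterization guaranteeing the existence of a critical maximum independent set in a König-Egerváry graph. Once that tool is available, the rest is a short chain of one-line observations.
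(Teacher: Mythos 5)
The paper does not prove this lemma at all: it is imported verbatim from \cite{LevMan2012b}, so there is no in-text argument to compare yours against. Judged on its own, your derivation is correct. Starting from Larson's characterization (which the paper itself quotes just before the lemma, citing \cite{Larson2011}), you take $S$ both maximum independent and critical, get $d(G)=d(S)=\alpha(G)-\left\vert N(S)\right\vert$ via Zhang's identity $d(G)=id(G)$, and then compute $\left\vert N(S)\right\vert$. One remark: your two-sided squeeze is more than you need. The maximality argument alone already gives the exact value, since $N(S)\cap S=\emptyset$ and every vertex outside a maximum (hence maximal) independent set has a neighbour in $S$, so $N(S)=V(G)-S$ and $\left\vert N(S)\right\vert =\left\vert V(G)\right\vert -\alpha(G)=\mu(G)$ by the K\"{o}nig-Egerv\'{a}ry identity. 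The upper bound $\left\vert N(S)\right\vert \leq\mu(G)$ obtained from Theorem \ref{th2} is therefore redundant (it is the ingredient one would use for the \emph{converse} direction of Larson's characterization, not for this one). Your closing caveat is the right one: the entire weight of the proof rests on the nontrivial forward direction of Larson's theorem, that every K\"{o}nig-Egerv\'{a}ry graph possesses a maximum independent set which is critical; since the paper states that result with a citation, your argument is a legitimate short derivation of the lemma from it.
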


Using this finding, we have strengthened the characterization from
\cite{Larson2011}.

\begin{theorem}
\label{th5}\cite{LevMan2012b} For a graph $G$, the following assertions are equivalent:

\emph{(i)} $G$ is a K\"{o}nig-Egerv\'{a}ry graph;

\emph{(ii)} there exists some maximum independent set which is critical;

\emph{(iii)} each of its maximum independent sets is critical.
\end{theorem}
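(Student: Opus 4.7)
The plan is to prove the cycle (iii) $\Rightarrow$ (ii) $\Rightarrow$ (i) $\Rightarrow$ (iii). The implication (iii) $\Rightarrow$ (ii) is immediate since $\Omega(G)$ is nonempty. For the remaining two implications the central observation I would repeatedly exploit is that for any $S\in\Omega(G)$ one has $N(S)=V(G)-S$: the inclusion $N(S)\subseteq V(G)-S$ is forced by independence of $S$, while if some $v\in V(G)-S$ had no neighbor in $S$, then $S\cup\{v\}$ would be a larger independent set, contradicting $|S|=\alpha(G)$.

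For (ii) $\Rightarrow$ (i), I would start from a maximum independent set $S$ that is also critical, and apply Theorem \ref{th2} to produce a matching from $N(S)$ into $S$. Using the identity $N(S)=V(G)-S$, this matching has exactly $|V(G)|-\alpha(G)$ edges, so $\mu(G)\geq |V(G)|-\alpha(G)$. Combined with the universal bound $\alpha(G)+\mu(G)\leq|V(G)|$, this yields equality, i.e., $G$ is a K\"onig-Egerv\'ary graph.

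For (i) $\Rightarrow$ (iii), I would invoke Lemma \ref{lem2} to get $d(G)=\alpha(G)-\mu(G)$, and then compute $d(S)$ directly for an arbitrary $S\in\Omega(G)$. Using $N(S)=V(G)-S$ once more,
\[
d(S)=|S|-|N(S)|=2\alpha(G)-|V(G)|=\alpha(G)-\mu(G)=d(G).
\]
Since always $id(G)\leq d(G)$ and, being independent, $S$ satisfies $d(S)\leq id(G)$, the equality $d(S)=d(G)$ forces $d(S)=id(G)$, so $S$ is critical.

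None of the three implications looks technically hard; the only step that requires an outside ingredient is (i) $\Rightarrow$ (iii), where Lemma \ref{lem2} gives the precise value of $d(G)$. If I had to point to a ``main obstacle,'' it would be recognizing that this quantitative identity is exactly what pushes the statement past the earlier result of \cite{Larson2011}: without knowing $d(G)=\alpha(G)-\mu(G)$ one obtains only the existence half of the equivalence, not the ``every maximum independent set'' half. Once Lemma \ref{lem2} is in hand, the computation $d(S)=2\alpha(G)-|V(G)|$ makes the strengthened statement essentially automatic.
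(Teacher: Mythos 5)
Your proof is correct, and it follows exactly the route the paper indicates for this imported result: the equivalence is obtained by combining Theorem \ref{th2} (Larson's matching from $N(S)$ into $S$) with the identity $N(S)=V(G)-S$ for maximum independent sets, and the quantitative Lemma \ref{lem2} is precisely the ingredient the authors credit for strengthening Larson's existence-only characterization to the ``every maximum independent set'' version. Nothing further is needed.
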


For a graph $G$, let $\mathrm{\ker}(G)$ be the intersection of all its
critical independent sets \cite{LevMan2012a}, and $\mathrm{diadem}(G)=%
{\displaystyle\bigcup}
\left\{  S:S\text{ \textit{is a critical independent set}}\right\}  $.

In this paper we present several properties of critical unions and
intersections of maximum independent sets leading to new characterizations of
K\"{o}nig-Egerv\'{a}ry graphs, in terms of $\mathrm{core}(G)$,
$\mathrm{corona}(G)$, and $\mathrm{diadem}(G)$.

\section{Preliminaries}

Let $G$ be the graph from Figure \ref{fig511}; the sets $X=\left\{
v_{1},v_{2},v_{3}\right\}  $, $Y=\left\{  v_{1},v_{2},v_{4}\right\}  $\ are
critical independent, and the sets $X\cap Y$, $X\cup Y$ are also critical, but
only $X\cap Y$ is also independent.\ In addition, one can easily see that
$\mathrm{\ker}(G)=\left\{  v_{1},v_{2}\right\}  \subseteq\mathrm{core}(G)$,
and $\mathrm{\ker}(G)$ is a minimal critical independent set of $G$.

\begin{theorem}
\label{th4}\cite{LevMan2012a} For a graph $G$, the following assertions are true:

\emph{(i)} $\mathrm{\ker}(G)\subseteq\mathrm{core}(G)$;

\emph{(ii)} if $A$ and $B$ are critical in $G$, then $A\cup B$ and $A\cap B$
are critical as well;

\emph{(iii)} $G$ has a unique minimal independent critical set, namely,
$\mathrm{\ker}(G)$.
\end{theorem}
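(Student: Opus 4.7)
My plan is to prove (ii) first, derive (iii) from it, and then deduce (i) via a sharpening of the same circle of ideas together with the Matching Lemma.

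For (ii), I would establish the supermodularity inequality $d(A\cup B)+d(A\cap B)\ge d(A)+d(B)$ for arbitrary $A,B\subseteq V(G)$. Since $|A|+|B|=|A\cup B|+|A\cap B|$, the inequality reduces to $|N(A)|+|N(B)|\ge |N(A\cup B)|+|N(A\cap B)|$, which follows from $N(A\cup B)=N(A)\cup N(B)$ together with the (generally strict) inclusion $N(A\cap B)\subseteq N(A)\cap N(B)$. When $d(A)=d(B)=d(G)$, the right-hand side equals $2d(G)$; since neither term on the left exceeds $d(G)=id(G)$, both must equal $d(G)$. For $A,B\in\mathrm{Ind}(G)$ one additionally has $A\cap B\in\mathrm{Ind}(G)$ automatically. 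Part (iii) is then an immediate consequence: as $G$ has only finitely many critical independent sets, $\mathrm{ker}(G)$ is the intersection of a finite family of them, so iterated use of (ii) makes $\mathrm{ker}(G)$ itself critical independent; because it is contained in every critical independent set by definition, it is simultaneously the unique minimum and the unique minimal such set.

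For (i), I would prove the auxiliary claim that $K\cap M$ is critical independent whenever $K$ is critical independent and $M\in\Omega(G)$. Apply the Matching Lemma to $A=K$ and $\Lambda=\{M\}$ to obtain an injection $f\colon K\setminus M\to M\setminus K$ sending each $v$ to some neighbor $f(v)\in M$. Since $M$ is independent and $K\cap M\subseteq M$, each $f(v)$ is non-adjacent to $K\cap M$, so $f$ embeds $K\setminus M$ into $N(K)\setminus N(K\cap M)$. Using $|K\cap M|=|K|-|K\setminus M|$ and $|N(K\cap M)|=|N(K)|-|N(K)\setminus N(K\cap M)|$, a direct computation gives
\[
d(K\cap M)=d(K)+\bigl(|N(K)\setminus N(K\cap M)|-|K\setminus M|\bigr)\ge d(K)=d(G),
\]
so $K\cap M$ is critical independent. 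Taking $K=\mathrm{ker}(G)$ and invoking the minimality established in (iii) yields $\mathrm{ker}(G)\subseteq\mathrm{ker}(G)\cap M$, hence $\mathrm{ker}(G)\subseteq M$ for every $M\in\Omega(G)$, which is exactly (i).

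The main obstacle is the supermodularity inequality in (ii): one does not in general have $N(A\cap B)=N(A)\cap N(B)$, only the one-sided inclusion, and the argument succeeds precisely because that inclusion is oriented the right way for our count. A secondary subtlety is the logical ordering, since the matching-swap computation for (i) is applied to $K=\mathrm{ker}(G)$ and therefore relies on (iii) having already secured that $\mathrm{ker}(G)$ is itself a critical independent set.
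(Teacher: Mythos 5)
Your proposal is correct. Note that the paper itself gives no proof of Theorem \ref{th4} --- it is imported by citation from \cite{LevMan2012a} --- so there is no in-paper argument to compare against; judged on its own, your proof is sound and essentially reconstructs the standard one. The supermodularity step $d(A\cup B)+d(A\cap B)\geq d(A)+d(B)$, reduced via $|A|+|B|=|A\cup B|+|A\cap B|$ to $|N(A\cup B)|+|N(A\cap B)|\leq |N(A)\cup N(B)|+|N(A)\cap N(B)|=|N(A)|+|N(B)|$, is exactly the right mechanism for (ii), and your squeeze ``sum $\geq 2d(G)$, each term $\leq d(G)$'' closes it. Part (iii) follows as you say, once one observes the family of critical independent sets is finite and nonempty (the maximum defining $id(G)$ is attained), so $\ker(G)$ is a finite intersection and (ii) applies iteratively. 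For (i), your auxiliary claim that $K\cap M$ is critical independent for $K$ critical independent and $M\in\Omega(G)$ is correctly derived: the matched partners $f(v)\in M$ of $v\in K\setminus M$ cannot lie in $N(K\cap M)$ because $M$ is independent, giving $|K\setminus M|\leq |N(K)\setminus N(K\cap M)|$ and hence $d(K\cap M)\geq d(K)=d(G)$; applying this to $K=\ker(G)$ and using (iii) yields $\ker(G)\subseteq M$ for all $M\in\Omega(G)$. Your remark about the logical ordering is apt --- (i) genuinely depends on (iii) in your development --- and the one-sided inclusion $N(A\cap B)\subseteq N(A)\cap N(B)$ is indeed oriented the right way; no gaps.
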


Various properties of $\mathrm{\ker}(G)$ and $\mathrm{core}(G)$ can be found
in \cite{LevMan2012c,LevMan2013a,LevMan2014}.

As an immediate consequence of Theorem \ref{th4}, we have the following.

\begin{corollary}
\label{cor3}For every graph $G$, $\mathrm{diadem}(G)$ is a critical set.
\end{corollary}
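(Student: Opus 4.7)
The plan is to reduce the statement directly to Theorem \ref{th4}(ii), which tells us that the class of critical sets is closed under pairwise union. Since the graph $G$ is finite, it has only finitely many critical independent sets, say $S_{1},S_{2},\ldots,S_{k}$, so a straightforward induction on the number of sets being unioned yields the conclusion.

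In more detail, I would first set up the base case: $S_{1}$ is by assumption a critical independent set, hence in particular a critical set, so $d(S_{1})=d(G)$. For the induction step, assume $T_{j}:=S_{1}\cup\cdots\cup S_{j}$ is critical, i.e.\ $d(T_{j})=d(G)$. Since $S_{j+1}$ is critical (being a critical independent set) and $T_{j}$ is critical by the induction hypothesis, Theorem \ref{th4}(ii) applied to $A=T_{j}$ and $B=S_{j+1}$ gives that $T_{j+1}=T_{j}\cup S_{j+1}$ is critical. After $k$ steps we conclude that $T_{k}=S_{1}\cup\cdots\cup S_{k}=\mathrm{diadem}(G)$ is a critical set.

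There is essentially no obstacle here; the only thing worth flagging is the distinction between a \emph{critical set} and a \emph{critical independent set}. The conclusion only asserts that $\mathrm{diadem}(G)$ is critical, which is exactly the property preserved by Theorem \ref{th4}(ii); independence of the union is not claimed and indeed can fail in general. Thus Theorem \ref{th4}(ii) is doing all the work, together with the finiteness of $V(G)$ (which guarantees the collection of critical independent sets is finite, so the big union reduces to a finite iterated union).
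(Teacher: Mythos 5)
Your proof is correct and matches the paper's intent exactly: the paper states Corollary \ref{cor3} as an ``immediate consequence of Theorem \ref{th4}'', meaning precisely the finite iteration of Theorem \ref{th4}\emph{(ii)} over the (finitely many) critical independent sets whose union is $\mathrm{diadem}(G)$. Your remark distinguishing ``critical set'' from ``critical independent set'' is the right point to flag, and your handling of it is correct.
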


For instance, the graph $G$ from Figure \ref{fig511} has $\mathrm{diadem}%
(G)=\left\{  v_{1},v_{2},v_{3},v_{4},v_{6},v_{7},v_{10}\right\}  $, which is
critical, but not independent.

The graph $G_{1}$ from Figure \ref{fig101101} has $d\left(  G_{1}\right)  =1$
and $d\left(  \mathrm{corona}(G_{1})\right)  =0$, which means that
$\mathrm{corona}(G_{1})$ is not a critical set. Notice that $G_{1}$ is not a
K\"{o}nig-Egerv\'{a}ry graph. Combining Theorems \ref{th5} and \ref{th4}%
\emph{(ii)}, we deduce the following.

\begin{corollary}
\label{cor2}If $G$ is a K\"{o}nig-Egerv\'{a}ry graph, then both $\mathrm{core}%
(G)$ and $\mathrm{corona}(G)$ are critical sets. Moreover, $\mathrm{corona}%
(G)=%
{\displaystyle\bigcup}
\left\{  A:A\text{ \textit{is a maximum critical independent set}}\right\}  $.
\end{corollary}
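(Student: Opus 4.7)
The plan is to reduce everything to Theorem \ref{th5} combined with the closure property of critical sets in Theorem \ref{th4}\emph{(ii)}. Assume $G$ is a K\"{o}nig-Egerv\'{a}ry graph. Then Theorem \ref{th5}\emph{(iii)} tells us that every $S\in\Omega(G)$ is a critical independent set. Since $\Omega(G)$ is a finite family (as $V(G)$ is finite), a straightforward induction on $|\Omega(G)|$, using Theorem \ref{th4}\emph{(ii)} at each step, shows that both $\bigcap\Omega(G)=\mathrm{core}(G)$ and $\bigcup\Omega(G)=\mathrm{corona}(G)$ are critical sets in $G$. This already gives the first assertion.

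For the "moreover" part, I would argue that in a K\"{o}nig-Egerv\'{a}ry graph the maximum critical independent sets are exactly the maximum independent sets, i.e.,
\[
\{A:A\text{ is a maximum critical independent set}\}=\Omega(G).
\]
The inclusion $\Omega(G)\subseteq\{A:A\text{ maximum critical}\}$ is immediate from Theorem \ref{th5}\emph{(iii)}, since every $S\in\Omega(G)$ is critical and has the largest possible size $\alpha(G)$ among independent sets (hence among critical independent ones). For the reverse inclusion, let $A$ be a maximum critical independent set; since \emph{some} $S\in\Omega(G)$ is critical and independent of size $\alpha(G)$, we have $|A|\geq\alpha(G)$, but $A$ is independent, so $|A|=\alpha(G)$, i.e., $A\in\Omega(G)$. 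Taking unions over the two equal families yields
\[
\mathrm{corona}(G)=\bigcup\{A:A\text{ is a maximum critical independent set}\},
\]
which is the "moreover" claim.

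The main (minor) obstacle is purely bookkeeping: making sure that Theorem \ref{th4}\emph{(ii)}, stated for two critical sets, legitimately extends to arbitrary finite intersections and unions (clear by induction), and being careful with the phrase "maximum critical independent set", which here means a critical independent set of largest cardinality among all critical independent sets. Once one observes that in a K\"{o}nig-Egerv\'{a}ry graph this maximum cardinality equals $\alpha(G)$ by Theorem \ref{th5}, the identification with $\Omega(G)$ is forced and the corollary follows without any further work.
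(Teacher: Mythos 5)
Your proof is correct and follows essentially the same route the paper intends: Theorem \ref{th5}\emph{(iii)} makes every maximum independent set critical, Theorem \ref{th4}\emph{(ii)} (extended by induction to finite families) makes $\mathrm{core}(G)$ and $\mathrm{corona}(G)$ critical, and the identification of the maximum critical independent sets with $\Omega(G)$ gives the ``moreover'' part. Your write-up merely makes explicit the details the paper leaves to the reader.
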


The converse of Corollary \ref{cor2} is not necessarily true; e.g., the graph
$G_{2}$ in Figure \ref{fig101101} is not a K\"{o}nig-Egerv\'{a}ry graph, while
$\mathrm{core}(G_{2})$ and $\mathrm{corona}(G_{2})$\ are critical.

\section{Unions and intersections of maximum independent sets}

\begin{theorem}
\label{Th1}Let $\Lambda\subseteq\Omega(G)$, and $\left\vert \Lambda\right\vert
\geq1$. Then%
\[
d\left(
{\displaystyle\bigcup}
\Lambda\right)  =\left\vert
{\displaystyle\bigcap}
\Lambda\right\vert +\left\vert
{\displaystyle\bigcup}
\Lambda\right\vert -\left\vert V\left(  G\right)  \right\vert \geq\max
_{S\in\Lambda}d\left(  S\right)  \text{.}%
\]
In particular,
\[
d\left(  \mathrm{corona}(G)\right)  =\left\vert \mathrm{corona}(G)\right\vert
+\left\vert \mathrm{core}(G)\right\vert -\left\vert V\left(  G\right)
\right\vert \geq2\alpha\left(  G\right)  -\left\vert V\left(  G\right)
\right\vert =\max_{S\in\Omega\left(  G\right)  }d\left(  S\right)  \text{.}%
\]

\end{theorem}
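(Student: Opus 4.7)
The plan is to write $U=\bigcup\Lambda$ and $I=\bigcap\Lambda$ and reduce the entire statement to the single identity
\[
N(U)=V(G)\setminus I.
\]
Once this is established, $|N(U)|=|V(G)|-|I|$, and the formula $d(U)=|U|+|I|-|V(G)|$ drops out by subtraction.

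For the inclusion $N(U)\subseteq V(G)\setminus I$, I will note that any $v\in I$ lies in every $S\in\Lambda$, so by independence of each such $S$ no neighbor of $v$ can lie in any $S$, hence no neighbor of $v$ lies in $U$. The reverse inclusion is the interesting direction: if $v\notin I$, pick $S\in\Lambda$ with $v\notin S$; since $S\in\Omega(G)$ is maximal as well as maximum, $v$ must have a neighbor in $S\subseteq U$. This maximality step is where the hypothesis $\Lambda\subseteq\Omega(G)$ is crucially used, and I expect it to be the main conceptual obstacle; without it, a vertex outside $I$ need not touch $U$ at all.

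For the inequality $d(U)\geq\max_{S\in\Lambda}d(S)$, the same maximality argument applied to each $S\in\Lambda$ individually gives $N(S)=V(G)\setminus S$, so $d(S)=2|S|-|V(G)|=2\alpha(G)-|V(G)|$ is in fact constant across $\Lambda$. To compare $d(U)$ with this common value, I will invoke the Matching Lemma (Lemma~\ref{MatchLem}) with $A=S$ for an arbitrary $S\in\Lambda$: it produces a matching from $S\setminus I$ into $U\setminus S$, so $|S\setminus I|\leq|U\setminus S|$, which rearranges (using $I\subseteq S\subseteq U$) to $|U|+|I|\geq 2|S|$. Substituting into the formula derived above yields
\[
d(U)=|U|+|I|-|V(G)|\geq 2|S|-|V(G)|=d(S),
\]
and taking the maximum over $\Lambda$ completes the general inequality.

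The ``in particular'' clause is then just the specialization $\Lambda=\Omega(G)$, for which $U=\mathrm{corona}(G)$ and $I=\mathrm{core}(G)$, with $\max_{S\in\Omega(G)}d(S)=2\alpha(G)-|V(G)|$ by the maximality calculation above.
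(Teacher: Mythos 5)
Your proposal is correct and follows essentially the same route as the paper: the identity $N\left(\bigcup\Lambda\right)=V(G)-\bigcap\Lambda$ (which the paper writes as $\left(\bigcup\Lambda-\bigcap\Lambda\right)\cup\left(V(G)-\bigcup\Lambda\right)$) gives the displayed formula for $d\left(\bigcup\Lambda\right)$, and the inequality reduces to $\left\vert\bigcap\Lambda\right\vert+\left\vert\bigcup\Lambda\right\vert\geq2\alpha(G)$, which you justify via Lemma~\ref{MatchLem} exactly as intended (the paper asserts this bound without spelling out the citation, so your explicit derivation is if anything slightly more complete).
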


\begin{proof}
Every vertex in $%
{\displaystyle\bigcup}
\Lambda-%
{\displaystyle\bigcap}
\Lambda$ has a neighbor in $%
{\displaystyle\bigcup}
\Lambda-%
{\displaystyle\bigcap}
\Lambda$, since $\Lambda\subseteq\Omega(G)$. Therefore, $N\left(
{\displaystyle\bigcup}
\Lambda\right)  =\left(
{\displaystyle\bigcup}
\Lambda-%
{\displaystyle\bigcap}
\Lambda\right)  \cup\left(  V\left(  G\right)  -%
{\displaystyle\bigcup}
\Lambda\right)  $, which implies%
\begin{gather*}
d\left(
{\displaystyle\bigcup}
\Lambda\right)  =\left\vert
{\displaystyle\bigcup}
\Lambda\right\vert -\left\vert N\left(
{\displaystyle\bigcup}
\Lambda\right)  \right\vert =\left\vert
{\displaystyle\bigcup}
\Lambda\right\vert -\left\vert \left(
{\displaystyle\bigcup}
\Lambda-%
{\displaystyle\bigcap}
\Lambda\right)  \cup\left(  V\left(  G\right)  -%
{\displaystyle\bigcup}
\Lambda\right)  \right\vert =\\
=\left\vert
{\displaystyle\bigcap}
\Lambda\right\vert -\left(  \left\vert V\left(  G\right)  \right\vert
-\left\vert
{\displaystyle\bigcup}
\Lambda\right\vert \right)  =\left\vert
{\displaystyle\bigcap}
\Lambda\right\vert +\left\vert
{\displaystyle\bigcup}
\Lambda\right\vert -\left\vert V\left(  G\right)  \right\vert .
\end{gather*}
On the other hand, for every $S\in\Omega\left(  G\right)  $ we have
\[
d\left(  S\right)  =\alpha\left(  G\right)  -\left(  \left\vert V\left(
G\right)  \right\vert -\alpha\left(  G\right)  \right)  =2\alpha\left(
G\right)  -\left\vert V\left(  G\right)  \right\vert .
\]

Since $\left\vert
{\displaystyle\bigcap}
\Lambda\right\vert +\left\vert
{\displaystyle\bigcup}
\Lambda\right\vert \geq2\alpha\left(  G\right)  $, we obtain%
\[
d\left(
{\displaystyle\bigcup}
\Lambda\right)  =\left\vert
{\displaystyle\bigcap}
\Lambda\right\vert +\left\vert
{\displaystyle\bigcup}
\Lambda\right\vert -\left\vert V\left(  G\right)  \right\vert \geq
2\alpha\left(  G\right)  -\left\vert V\left(  G\right)  \right\vert =d\left(
S\right)  \text{,}%
\]
as required.

In particular, if $\Lambda=\Omega(G)$, then $%
{\displaystyle\bigcup}
\Lambda=\mathrm{corona}(G)$, $%
{\displaystyle\bigcap}
\Lambda=\mathrm{core}(G)$, and the conclusion follows.
\end{proof}

Notice that if $A$ is a critical independent set in a graph $G$ having
$d(G)>0$, then $A\cap S\neq\emptyset$ holds for every $S\in\Omega(G)$, because
$\emptyset\neq\ker\left(  G\right)  \subseteq A\cap\mathrm{core}(G)\subseteq
A\cap S$, according to Theorem \ref{th4}\emph{(i)}.

\begin{proposition}
Let $A$ be a critical independent set of a graph $G$ with $\ker\left(
G\right)  =\emptyset$, and $\Lambda=\{S\in\Omega(G):A\cap S=\emptyset\}$. Then
$\left\vert
{\displaystyle\bigcap}
\Lambda\right\vert \geq\left\vert A\right\vert $.
\end{proposition}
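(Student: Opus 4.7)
My plan is to prove the stronger statement $N(A)\subseteq\bigcap\Lambda$, using the fact that the hypothesis $\mathrm{\ker}(G)=\emptyset$ secretly encodes the equality $|N(A)|=|A|$.

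First I would translate $\mathrm{\ker}(G)=\emptyset$ into a numerical statement. By Theorem~\ref{th4}(ii) the intersection of any family of critical independent sets is critical, so $\mathrm{\ker}(G)$ is itself a critical independent set; consequently $id(G)=d(\mathrm{\ker}(G))=d(\emptyset)=0$. Since $A$ is critical independent, this gives $d(A)=0$, equivalently $|N(A)|=|A|$.

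Next, I fix an arbitrary $S\in\Lambda$ and apply the Matching Lemma (Lemma~\ref{MatchLem}) to the independent set $A$ and the singleton family $\{S\}\subseteq\Omega(G)$. Because $A\cap S=\emptyset$, the lemma provides a matching from $A=A-S$ into $S=S-A$, and hence produces $|A|$ distinct vertices of $N(A)\cap S$. Combining $|N(A)\cap S|\geq|A|$ with $|N(A)|=|A|$ forces $N(A)\cap S=N(A)$, i.e., $N(A)\subseteq S$. Since this containment holds for every $S\in\Lambda$, one obtains $N(A)\subseteq\bigcap\Lambda$, and therefore $|\bigcap\Lambda|\geq|N(A)|=|A|$, as required. (If $\Lambda=\emptyset$, the conclusion is immediate under the usual convention $\bigcap\emptyset=V(G)$.)

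The main obstacle is really the opening move: one must recognize that $\mathrm{\ker}(G)=\emptyset$ is equivalent to $id(G)=0$, and hence to $|N(A)|=|A|$ for every critical independent $A$; once this equality is in hand, the Matching Lemma converts the disjointness assumption $A\cap S=\emptyset$ into the containment $N(A)\subseteq S$ almost automatically, and the intersection over $\Lambda$ finishes the argument.
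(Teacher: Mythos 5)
Your proof is correct and follows essentially the same route as the paper's: deduce $\left\vert N(A)\right\vert =\left\vert A\right\vert $ from the criticality of $A$ together with $d(G)=0$, match $A$ into each $S\in\Lambda$ to force $N(A)\subseteq S$, and then intersect over $\Lambda$. The only cosmetic differences are that you invoke Lemma \ref{MatchLem} where the paper cites Theorem \ref{Berge} (both yield the same matching from $A$ into $S$), and that you make explicit the step $\ker(G)=\emptyset\Rightarrow d(G)=0$, which the paper uses without comment.
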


\begin{proof}
Let $S\in\Lambda$. Since $A$ is critical and $d(G)=0$, it follows that
$\left\vert A\right\vert =\left\vert N\left(  A\right)  \right\vert $. By
Theorem \ref{Berge}, there is a matching from $A$ into $S$, because $A$ is
independent and disjoint from $S$. Consequently, we infer that $N\left(
A\right)  \subset S$. Hence, we obtain $\left\vert
{\displaystyle\bigcap}
\Lambda\right\vert \geq\left\vert N(A)\right\vert =\left\vert A\right\vert $,
as required.
\end{proof}

\begin{theorem}
\label{Th2}Let $\Lambda\subseteq\Omega(G)$, and $\left\vert \Lambda\right\vert
\geq1$.

\emph{(i)} If $%
{\displaystyle\bigcup}
\Lambda$ is critical, then $\left\vert N\left(
{\displaystyle\bigcap}
\Lambda\right)  \right\vert +\left\vert
{\displaystyle\bigcup}
\Lambda\right\vert =\left\vert V\left(  G\right)  \right\vert $, and $%
{\displaystyle\bigcap}
\Lambda$ is critical.

\emph{(ii)} If $%
{\displaystyle\bigcap}
\Lambda$ is critical, then
\[
\left\vert N\left(
{\displaystyle\bigcap}
\Lambda\right)  \right\vert +\left\vert
{\displaystyle\bigcup}
\Lambda\right\vert \leq\left\vert V\left(  G\right)  \right\vert \text{, and
}d\left(
{\displaystyle\bigcap}
\Lambda\right)  \geq2\alpha\left(  G\right)  -\left\vert V\left(  G\right)
\right\vert .
\]

\end{theorem}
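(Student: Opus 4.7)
The plan is to derive both parts from a single structural observation about the placement of $N(\bigcap\Lambda)$ relative to $\bigcup\Lambda$, combined with the identity already established in Theorem \ref{Th1}.

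The opening step is to verify that $N(\bigcap\Lambda)\cap\bigcup\Lambda=\emptyset$: every $v\in\bigcup\Lambda$ lies in some $S\in\Lambda$ together with $\bigcap\Lambda\subseteq S$, and the independence of $S$ prevents $v$ from being adjacent to anything in $\bigcap\Lambda$. Consequently $N(\bigcap\Lambda)\subseteq V(G)-\bigcup\Lambda$, which already yields
\[
|N(\bigcap\Lambda)|+|\bigcup\Lambda|\leq|V(G)|.
\]

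For part (i), I would plug this containment into the definition of the difference to obtain
\[
d(\bigcap\Lambda)=|\bigcap\Lambda|-|N(\bigcap\Lambda)|\geq|\bigcap\Lambda|+|\bigcup\Lambda|-|V(G)|,
\]
and recognize the right-hand side as $d(\bigcup\Lambda)$ via Theorem \ref{Th1}. The hypothesis that $\bigcup\Lambda$ is critical gives $d(\bigcup\Lambda)=id(G)$, while $\bigcap\Lambda$ is independent (being a subset of any $S\in\Lambda$), so $d(\bigcap\Lambda)\leq id(G)$. Sandwiching these inequalities yields $d(\bigcap\Lambda)=d(\bigcup\Lambda)=id(G)$, proving that $\bigcap\Lambda$ is critical and, simultaneously, collapsing the preceding displayed inequality into the equality $|N(\bigcap\Lambda)|+|\bigcup\Lambda|=|V(G)|$.

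Part (ii) then requires essentially no new work. Its first assertion is precisely the inequality already derived from the key containment, now recast under its own hypothesis. For the second assertion, I would recall that every $S\in\Omega(G)$ satisfies $d(S)=2\alpha(G)-|V(G)|$, so $id(G)\geq2\alpha(G)-|V(G)|$; assuming $\bigcap\Lambda$ is critical therefore gives $d(\bigcap\Lambda)=id(G)\geq2\alpha(G)-|V(G)|$. I do not anticipate a real obstacle here: the entire argument turns on the elementary observation that $N(\bigcap\Lambda)$ is disjoint from $\bigcup\Lambda$, after which Theorem \ref{Th1} supplies all the arithmetic. The only point to keep in mind is that the hypothesis $|\Lambda|\geq1$ is needed tacitly to ensure that $\bigcap\Lambda$ is a subset of an independent set, hence independent itself, so that the inequality $d(\bigcap\Lambda)\leq id(G)$ is legitimate.
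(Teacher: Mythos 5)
Your proposal is correct and takes essentially the same route as the paper: both arguments rest on the identity $d\left(\bigcup\Lambda\right)=\left\vert\bigcap\Lambda\right\vert+\left\vert\bigcup\Lambda\right\vert-\left\vert V(G)\right\vert$ from Theorem \ref{Th1} together with the disjointness of $N\left(\bigcap\Lambda\right)$ and $\bigcup\Lambda$, and you merely run the resulting inequalities in the opposite order (starting from the disjointness bound and closing with maximality of the critical difference, where the paper starts from $d(G)\geq d\left(\bigcap\Lambda\right)$ and closes with disjointness). Your observations that the first inequality in \emph{(ii)} holds unconditionally and that the second follows directly from $d(S)=2\alpha(G)-\left\vert V(G)\right\vert$ for $S\in\Omega(G)$ are valid minor simplifications, not a different method.
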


\begin{proof}
\emph{(i)} By definition of $d(G)$ and Theorem \ref{Th1}, we get
\[
d\left(  G\right)  =d\left(
{\displaystyle\bigcup}
\Lambda\right)  =\left\vert
{\displaystyle\bigcap}
\Lambda\right\vert +\left\vert
{\displaystyle\bigcup}
\Lambda\right\vert -\left\vert V\left(  G\right)  \right\vert \geq d\left(
{\displaystyle\bigcap}
\Lambda\right)  =\left\vert
{\displaystyle\bigcap}
\Lambda\right\vert -\left\vert N\left(
{\displaystyle\bigcap}
\Lambda\right)  \right\vert .
\]
Hence we infer that $\left\vert N\left(
{\displaystyle\bigcap}
\Lambda\right)  \right\vert +\left\vert
{\displaystyle\bigcup}
\Lambda\right\vert \geq\left\vert V\left(  G\right)  \right\vert $. Thus,
$\left\vert N\left(
{\displaystyle\bigcap}
\Lambda\right)  \right\vert +\left\vert
{\displaystyle\bigcup}
\Lambda\right\vert =\left\vert V\left(  G\right)  \right\vert $, because
$\left(  N\left(
{\displaystyle\bigcap}
\Lambda\right)  \right)  \cap\left(
{\displaystyle\bigcup}
\Lambda\right)  =\emptyset$.

Moreover, we deduce that%
\[
d\left(
{\displaystyle\bigcup}
\Lambda\right)  =\left\vert
{\displaystyle\bigcap}
\Lambda\right\vert +\left\vert
{\displaystyle\bigcup}
\Lambda\right\vert -\left\vert V\left(  G\right)  \right\vert =\left\vert
{\displaystyle\bigcap}
\Lambda\right\vert -\left\vert N\left(
{\displaystyle\bigcap}
\Lambda\right)  \right\vert =d\left(
{\displaystyle\bigcap}
\Lambda\right)  \text{,}%
\]
i.e., $%
{\displaystyle\bigcap}
\Lambda$ is a critical set.

\emph{(ii)} By definition of $d(G)$ and Theorem \ref{Th1}, we have%
\begin{gather*}
d\left(  G\right)  =d\left(
{\displaystyle\bigcap}
\Lambda\right)  =\left\vert
{\displaystyle\bigcap}
\Lambda\right\vert -\left\vert N\left(
{\displaystyle\bigcap}
\Lambda\right)  \right\vert \geq\\
\geq d\left(
{\displaystyle\bigcup}
\Lambda\right)  =\left\vert
{\displaystyle\bigcap}
\Lambda\right\vert +\left\vert
{\displaystyle\bigcup}
\Lambda\right\vert -\left\vert V\left(  G\right)  \right\vert \geq
2\alpha\left(  G\right)  -\left\vert V\left(  G\right)  \right\vert ,
\end{gather*}
which completes the proof.
\end{proof}

In particular, taking $\Lambda=\Omega(G)$ in Theorem \ref{Th2}, we obtain the following.

\begin{corollary}
\label{cor1}If $\mathrm{corona}(G)$ is a critical set, then $\left\vert
\mathrm{corona}(G)\right\vert +\left\vert N\left(  \mathrm{core}(G)\right)
\right\vert =\left\vert V\left(  G\right)  \right\vert $ and $\mathrm{core}%
(G)$ is critical.
\end{corollary}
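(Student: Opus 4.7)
The plan is to obtain the corollary as a direct specialization of Theorem \ref{Th2}(i), taking $\Lambda$ to be the full family $\Omega(G)$ of all maximum independent sets of $G$. Under this choice, one has $\bigcup\Lambda=\mathrm{corona}(G)$ and $\bigcap\Lambda=\mathrm{core}(G)$ by the very definitions of these objects, and moreover $\left|\Lambda\right|\geq 1$ holds automatically for every finite graph, since $\Omega(G)$ is nonempty. Consequently the hypothesis of Theorem \ref{Th2}(i) — that $\bigcup\Lambda$ be critical — translates in this instance to the hypothesis of the corollary.

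Having verified admissibility of the specialization, the two conclusions of Theorem \ref{Th2}(i) translate directly. The equation $\left|N(\bigcap\Lambda)\right|+\left|\bigcup\Lambda\right|=\left|V(G)\right|$ becomes $\left|N(\mathrm{core}(G))\right|+\left|\mathrm{corona}(G)\right|=\left|V(G)\right|$, which is the first assertion. The conclusion that $\bigcap\Lambda$ is critical becomes the statement that $\mathrm{core}(G)$ is critical, which is the second assertion.

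Because the result is a direct specialization of an already established theorem, there is really no obstacle to overcome; the only care needed is in bookkeeping, namely checking that the set-theoretic identifications $\bigcup\Omega(G)=\mathrm{corona}(G)$ and $\bigcap\Omega(G)=\mathrm{core}(G)$ are made explicit, and that $\Omega(G)$ satisfies the mild nonemptiness requirement of Theorem \ref{Th2}. Once these trivial matches are pointed out, both conclusions follow verbatim from the more general statement, and no further computation is required.
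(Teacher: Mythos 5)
Your proposal is correct and coincides with the paper's own derivation: the authors likewise obtain Corollary \ref{cor1} by taking $\Lambda=\Omega(G)$ in Theorem \ref{Th2}\emph{(i)}, using the identifications $\bigcup\Omega(G)=\mathrm{corona}(G)$ and $\bigcap\Omega(G)=\mathrm{core}(G)$. Nothing further is needed.
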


Notice that if $\mathrm{core}(G)$ is critical, then $\mathrm{corona}(G)$ is
not necessarily critical. For example, the graph $G_{1}$ from Figure
\ref{fig101101} has $d\left(  G_{1}\right)  =d(\mathrm{core}(G_{1}))=1$, while
$\mathrm{corona}(G_{1})$ is not a critical set.

\begin{theorem}
\label{th9}If $G$ is a K\"{o}nig-Egerv\'{a}ry graph, then

\emph{(i)} \cite{JLM2015} $\left\vert
{\displaystyle\bigcap}
\Lambda\right\vert +\left\vert
{\displaystyle\bigcup}
\Lambda\right\vert =2\alpha\left(  G\right)  $ holds for every family
$\Lambda\subseteq\Omega(G)$, $\left\vert \Lambda\right\vert \geq1$;

\emph{(ii)} \cite{LevManLemma2011} $\left\vert \mathrm{corona}(G)\right\vert
+\left\vert \mathrm{core}(G)\right\vert =2\alpha\left(  G\right)  $.
\end{theorem}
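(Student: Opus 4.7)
The plan is to prove (i) first and obtain (ii) as the special case $\Lambda = \Omega(G)$, for which $\bigcap\Lambda = \mathrm{core}(G)$ and $\bigcup\Lambda = \mathrm{corona}(G)$.

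For (i), the first step is to show that $\bigcup\Lambda$ is a critical set. Because $G$ is K\"{o}nig-Egerv\'{a}ry, Theorem \ref{th5}\emph{(iii)} tells us that every member of $\Omega(G)$, and in particular every $S\in\Lambda$, is critical. Since $G$ is finite, $\Omega(G)$ (and hence $\Lambda$) is a finite family, so a straightforward induction on $\left\vert \Lambda\right\vert $ based on Theorem \ref{th4}\emph{(ii)} yields that $\bigcup\Lambda$ is itself critical; that is, $d\left(\bigcup\Lambda\right) = d(G)$.

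The second step evaluates both sides of this equality. By Lemma \ref{lem2} together with the K\"{o}nig-Egerv\'{a}ry identity $\alpha(G)+\mu(G)=\left\vert V(G)\right\vert $, we obtain $d(G) = \alpha(G)-\mu(G) = 2\alpha(G) - \left\vert V(G)\right\vert $. On the other hand, Theorem \ref{Th1} furnishes $d\left(\bigcup\Lambda\right) = \left\vert \bigcap\Lambda\right\vert + \left\vert \bigcup\Lambda\right\vert - \left\vert V(G)\right\vert $. Equating these two expressions and cancelling $\left\vert V(G)\right\vert $ produces $\left\vert \bigcap\Lambda\right\vert + \left\vert \bigcup\Lambda\right\vert = 2\alpha(G)$, which is (i); specialising to $\Lambda=\Omega(G)$ delivers (ii).

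I do not anticipate a genuine obstacle: every ingredient (Lemma \ref{lem2}, Theorems \ref{th5}, \ref{th4}, \ref{Th1}) is already in place. The only minor subtlety is that Theorem \ref{th4}\emph{(ii)} is stated for two critical sets while $\Lambda$ may be larger, but this is handled by the routine induction on $\left\vert \Lambda\right\vert $ made legitimate by the finiteness of $G$.
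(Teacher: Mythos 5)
Your proof is correct, and it takes a genuinely shorter route than the paper's. The two arguments share their first step: both invoke Theorem \ref{th5} together with Theorem \ref{th4}\emph{(ii)} to conclude that $\bigcup\Lambda$ is critical (the paper, like you, implicitly iterates Theorem \ref{th4}\emph{(ii)} over the finite family $\Lambda$; your explicit remark about the induction is a fair point, not a real obstacle). From there the paper applies Lemma \ref{lem2} to both $\bigcup\Lambda$ and $\bigcap\Lambda$, adds the two resulting identities, and then eliminates the term $\left\vert N\left(\bigcup\Lambda\right)\right\vert +\left\vert N\left(\bigcap\Lambda\right)\right\vert$ by showing, via Theorem \ref{Th2}\emph{(i)}, that it equals $2\mu\left(  G\right)  $. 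You instead take the formula $d\left(\bigcup\Lambda\right)  =\left\vert \bigcap\Lambda\right\vert +\left\vert \bigcup\Lambda\right\vert -\left\vert V\left(  G\right)  \right\vert$ directly from Theorem \ref{Th1} and equate it with $d\left(  G\right)  =\alpha\left(  G\right)  -\mu\left(  G\right)  =2\alpha\left(  G\right)  -\left\vert V\left(  G\right)  \right\vert$, obtained from Lemma \ref{lem2} and the K\"{o}nig-Egerv\'{a}ry identity. Your version needs criticality of the union only (not of the intersection) and bypasses Theorem \ref{Th2} altogether; indeed you could even avoid Lemma \ref{lem2}, since Theorem \ref{th5} plus the computation $d\left(  S\right)  =2\alpha\left(  G\right)  -\left\vert V\left(  G\right)  \right\vert$ for $S\in\Omega\left(  G\right)  $, already carried out inside the proof of Theorem \ref{Th1}, yields $d\left(  G\right)  =2\alpha\left(  G\right)  -\left\vert V\left(  G\right)  \right\vert$ directly. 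The paper's longer detour buys nothing extra for this particular statement; both arguments ultimately rest on the same identity from Theorem \ref{Th1}, which the paper reaches indirectly through Theorem \ref{Th2}\emph{(i)}.
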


\begin{proof}
\emph{(i)} By Theorems \ref{th5} and \ref{th4}\emph{(ii)}, both $%
{\displaystyle\bigcup}
\Lambda$ and $%
{\displaystyle\bigcap}
\Lambda$ are critical sets. According to Lemma \ref{lem2}, we have
\begin{align*}
d\left(
{\displaystyle\bigcup}
\Lambda\right)   &  =\left\vert
{\displaystyle\bigcup}
\Lambda\right\vert -\left\vert N\left(
{\displaystyle\bigcup}
\Lambda\right)  \right\vert =\alpha\left(  G\right)  -\mu\left(  G\right)  ,\\
d\left(
{\displaystyle\bigcap}
\Lambda\right)   &  =\left\vert
{\displaystyle\bigcap}
\Lambda\right\vert -\left\vert N\left(
{\displaystyle\bigcap}
\Lambda\right)  \right\vert =\alpha\left(  G\right)  -\mu\left(  G\right)  .
\end{align*}
Hence, $\left\vert
{\displaystyle\bigcap}
\Lambda\right\vert +\left\vert
{\displaystyle\bigcup}
\Lambda\right\vert =2\alpha\left(  G\right)  -2\mu\left(  G\right)
+\left\vert N\left(
{\displaystyle\bigcup}
\Lambda\right)  \right\vert +\left\vert N\left(
{\displaystyle\bigcap}
\Lambda\right)  \right\vert $.

By Theorem \ref{Th2}\emph{(i)}, we infer that
\begin{gather*}
\left\vert N\left(
{\displaystyle\bigcup}
\Lambda\right)  \right\vert +\left\vert N\left(
{\displaystyle\bigcap}
\Lambda\right)  \right\vert =\left\vert N\left(
{\displaystyle\bigcup}
\Lambda\right)  \right\vert +\left\vert V\left(  G\right)  \right\vert
-\left\vert
{\displaystyle\bigcup}
\Lambda\right\vert =\\
=\left\vert N\left(
{\displaystyle\bigcup}
\Lambda\right)  \right\vert +\alpha\left(  G\right)  +\mu\left(  G\right)
-\left\vert
{\displaystyle\bigcup}
\Lambda\right\vert =\\
=\alpha\left(  G\right)  +\mu\left(  G\right)  -d\left(
{\displaystyle\bigcup}
\Lambda\right)  =2\mu\left(  G\right)  .
\end{gather*}
Consequently, we obtain $\left\vert
{\displaystyle\bigcap}
\Lambda\right\vert +\left\vert
{\displaystyle\bigcup}
\Lambda\right\vert =2\alpha\left(  G\right)  $, as claimed.

\emph{(ii)} It follows from Part \emph{(i)}, by taking $\Lambda\subseteq
\Omega(G)$.
\end{proof}

The graph $G_{2}$ from Figure \ref{fig101101} has $\left\vert \mathrm{corona}%
(G_{2})\right\vert +\left\vert \mathrm{core}(G_{2})\right\vert =13>12=2\alpha
\left(  G_{2}\right)  $. On the other hand, there is a
non-K\"{o}nig-Egerv\'{a}ry graph, namely $G_{1}$ in Figure \ref{fig101101},
that satisfies $\left\vert \mathrm{corona}(G_{1})\right\vert +\left\vert
\mathrm{core}(G_{1})\right\vert =10=2\alpha\left(  G_{1}\right)  $.

If $%
{\displaystyle\bigcap}
\Lambda$ is a critical set, then $%
{\displaystyle\bigcup}
\Lambda$ is not necessarily critical. For instance, consider the graph $G$
from Figure \ref{fig144}, and $\Lambda=\{S_{1},S_{2}\}$, where $S_{1}%
=\{x,y,u\}$ and $S_{2}=\{x,y,w\}$. Clearly, $%
{\displaystyle\bigcap}
\Lambda=\{x,y\}=\mathrm{core}(G)$ is critical, while $%
{\displaystyle\bigcup}
\Lambda=\{x,y,u,w\}$ is not a critical set.

\begin{figure}[h]
\setlength{\unitlength}{1cm}\begin{picture}(5,1.2)\thicklines
\multiput(6,0)(1,0){4}{\circle*{0.29}}
\multiput(7,1)(2,0){2}{\circle*{0.29}}
\put(6,0){\line(1,0){3}}
\put(7,0){\line(0,1){1}}
\put(9,0){\line(0,1){1}}
\put(8,0){\line(1,1){1}}
\put(5.7,0){\makebox(0,0){$x$}}
\put(6.7,1){\makebox(0,0){$y$}}
\put(6.7,0.2){\makebox(0,0){$z$}}
\put(7.8,0.2){\makebox(0,0){$v$}}
\put(9.3,1){\makebox(0,0){$u$}}
\put(9.3,0){\makebox(0,0){$w$}}
\put(5,0.5){\makebox(0,0){$G$}}
\end{picture}\caption{$\mathrm{core}(G)=\ker\left(  G\right)  =\{x,y\}$.}%
\label{fig144}%
\end{figure}

\begin{theorem}
\label{th8}Let $\Lambda\subseteq\Omega(G)$, and $\left\vert \Lambda\right\vert
\geq1$. Then $G$ is a K\"{o}nig-Egerv\'{a}ry graph if and only if $%
{\displaystyle\bigcup}
\Lambda$ is critical and $\left\vert
{\displaystyle\bigcap}
\Lambda\right\vert +\left\vert
{\displaystyle\bigcup}
\Lambda\right\vert =2\alpha\left(  G\right)  $.
\end{theorem}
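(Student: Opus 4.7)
The plan is to prove the two implications separately, relying on results already established in the paper. The forward direction reduces to direct appeals to earlier theorems, while the converse is the substantive part and proceeds by showing that every maximum independent set is critical.

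For the forward implication, assuming $G$ is a K\"onig-Egerv\'ary graph, I would invoke Theorem \ref{th5} to conclude that every $S \in \Omega(G)$ is critical, and in particular every member of $\Lambda$ is critical. Iterating Theorem \ref{th4}\emph{(ii)} then shows that $\bigcup \Lambda$ is critical. The equality $|\bigcap \Lambda| + |\bigcup \Lambda| = 2\alpha(G)$ is precisely Theorem \ref{th9}\emph{(i)}, so this direction requires essentially no new argument.

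For the converse, my aim is to show that every maximum independent set of $G$ is critical, after which Theorem \ref{th5} delivers the conclusion. The first step is to apply Theorem \ref{Th1} to the union, obtaining
\[
d\left(\bigcup \Lambda\right) = \left|\bigcap \Lambda\right| + \left|\bigcup \Lambda\right| - |V(G)|,
\]
which, combined with the hypothesis $|\bigcap \Lambda| + |\bigcup \Lambda| = 2\alpha(G)$, gives $d\left(\bigcup \Lambda\right) = 2\alpha(G) - |V(G)|$. Since $\bigcup \Lambda$ is assumed critical, this value coincides with $d(G)$, so $d(G) = 2\alpha(G) - |V(G)|$. Next, for any $S \in \Omega(G)$, independence of $S$ yields $N(S) \subseteq V(G) - S$, while maximality of $S$ forces every vertex of $V(G) - S$ to have a neighbor in $S$; hence $N(S) = V(G) - S$ and $d(S) = 2\alpha(G) - |V(G)| = d(G)$. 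Thus $S$ is critical, and Theorem \ref{th5} now gives that $G$ is K\"onig-Egerv\'ary.

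I expect the main obstacle to be purely conceptual: recognizing the complementary roles played by the two hypotheses in the converse. Criticality of $\bigcup \Lambda$ alone only gives $d\left(\bigcup \Lambda\right) = d(G)$, while the cardinality identity alone only evaluates $d\left(\bigcup \Lambda\right)$ as $2\alpha(G) - |V(G)|$; neither condition in isolation forces $G$ to be K\"onig-Egerv\'ary, but together they pin $d(G)$ down to the universal value of $d(S)$ for $S \in \Omega(G)$, which is exactly the hypothesis Theorem \ref{th5}\emph{(ii)} demands.
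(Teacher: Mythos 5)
Your proof is correct, and the converse direction takes a genuinely different---and shorter---route than the paper's. The paper proves the converse by explicitly constructing a maximum matching: it uses Theorem \ref{Th2}\emph{(i)} to conclude that $\bigcap\Lambda$ is critical, Theorem \ref{th2} to match $N\left(\bigcap\Lambda\right)$ into $\bigcap\Lambda$, and the Matching Lemma (Lemma \ref{MatchLem}) together with the cardinality hypothesis to produce a perfect matching between $\bigcup\Lambda-S$ and $S-\bigcap\Lambda$; since $\bigcup\Lambda\cup N\left(\bigcap\Lambda\right)=V(G)$, the two matchings combine to saturate $V(G)-S$, certifying $\alpha(G)+\mu(G)=\left\vert V(G)\right\vert$ directly. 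You instead stay entirely at the level of the difference function: Theorem \ref{Th1} plus the cardinality hypothesis evaluates $d\left(\bigcup\Lambda\right)$ as $2\alpha(G)-\left\vert V(G)\right\vert$, criticality of $\bigcup\Lambda$ pins $d(G)$ to that value, and since $d(S)=2\alpha(G)-\left\vert V(G)\right\vert$ for every $S\in\Omega(G)$ (a computation already carried out inside the proof of Theorem \ref{Th1}), every maximum independent set is critical and Theorem \ref{th5} finishes. Your argument is more economical and makes transparent why both hypotheses are needed, exactly as you observe; the paper's argument is constructive, exhibiting the maximum matching that witnesses the K\"{o}nig-Egerv\'{a}ry property rather than delegating that work to the black box of Theorem \ref{th5}. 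The forward direction is identical in both.
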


\begin{proof}
Combining Theorems \ref{th5} and \ref{th4}\emph{(ii)}, we infer that $%
{\displaystyle\bigcup}
\Lambda$ is critical. The equality $\left\vert
{\displaystyle\bigcap}
\Lambda\right\vert +\left\vert
{\displaystyle\bigcup}
\Lambda\right\vert =2\alpha\left(  G\right)  $ holds by Theorem \ref{th9}%
\emph{(i)}.

Conversely, according to Theorem \ref{Th2}\emph{(i)}, the set $%
{\displaystyle\bigcap}
\Lambda$ is critical. Hence, by Theorem \ref{th2}, there exists a matching
from $N\left(
{\displaystyle\bigcap}
\Lambda\right)  $ into $%
{\displaystyle\bigcap}
\Lambda$. Theorem \ref{Th2}\emph{(i)} ensures that $\left\vert N\left(
{\displaystyle\bigcap}
\Lambda\right)  \right\vert +\left\vert
{\displaystyle\bigcup}
\Lambda\right\vert =\left\vert V\left(  G\right)  \right\vert $, which means
that $%
{\displaystyle\bigcup}
\Lambda\cup$ $N\left(
{\displaystyle\bigcap}
\Lambda\right)  =V(G)$. To complete the proof that $G$ is a
K\"{o}nig-Egerv\'{a}ry graph, one has to find a matching from $%
{\displaystyle\bigcup}
\Lambda-S$ into $S-%
{\displaystyle\bigcap}
\Lambda$ for some maximum independent set $S\in\Lambda$. Actually, in
accordance with Lemma \ref{MatchLem}, there is a matching, say $M$, from $S-%
{\displaystyle\bigcap}
\Lambda$ to $%
{\displaystyle\bigcup}
\Lambda-S$. Since $\left\vert
{\displaystyle\bigcap}
\Lambda\right\vert +\left\vert
{\displaystyle\bigcup}
\Lambda\right\vert =2\alpha\left(  G\right)  $ and $\left\vert S\right\vert
=\alpha\left(  G\right)  $, we infer that $\left\vert S-%
{\displaystyle\bigcap}
\Lambda\right\vert =\left\vert
{\displaystyle\bigcup}
\Lambda-S\right\vert $. Consequently, $M$ is a perfect matching, and this
shows that $M$ is also a matching from $%
{\displaystyle\bigcup}
\Lambda-S$ into $S-%
{\displaystyle\bigcap}
\Lambda$, as required.
\end{proof}

\begin{remark}
If $%
{\displaystyle\bigcap}
\Lambda$ is critical and $\left\vert
{\displaystyle\bigcap}
\Lambda\right\vert +\left\vert
{\displaystyle\bigcup}
\Lambda\right\vert =2\alpha\left(  G\right)  $, then $G$ is not necessarily a
K\"{o}nig-Egerv\'{a}ry graph. For example, let $G$ be the graph from Figure
\ref{fig144}, and $\Lambda=\{S_{1},S_{2}\}$, where $S_{1}=\{x,y,u\},S_{2}%
=\{x,y,w\}$. Hence, $%
{\displaystyle\bigcap}
\Lambda=\{x,y\}=\ker\left(  G\right)  $ is a critical set, $\left\vert
{\displaystyle\bigcap}
\Lambda\right\vert +\left\vert
{\displaystyle\bigcup}
\Lambda\right\vert =6=2\alpha\left(  G\right)  $, while $G$ is not a
K\"{o}nig-Egerv\'{a}ry graph. Clearly, the set $%
{\displaystyle\bigcup}
\Lambda=\{x,y,u,w\}$ is not critical.
\end{remark}

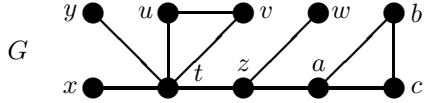
\begin{figure}[h]
\setlength{\unitlength}{1cm}\begin{picture}(5,1.2)\thicklines
\multiput(5,0)(1,0){5}{\circle*{0.29}}
\multiput(5,1)(1,0){5}{\circle*{0.29}}
\put(5,0){\line(1,0){4}}
\put(5,1){\line(1,-1){1}}
\put(6,0){\line(0,1){1}}
\put(6,0){\line(1,1){1}}
\put(6,1){\line(1,0){1}}
\put(7,0){\line(1,1){1}}
\put(8,0){\line(1,1){1}}
\put(9,0){\line(0,1){1}}
\put(4.7,0){\makebox(0,0){$x$}}
\put(4.7,1){\makebox(0,0){$y$}}
\put(6.4,0.2){\makebox(0,0){$t$}}
\put(5.7,1){\makebox(0,0){$u$}}
\put(7.3,1){\makebox(0,0){$v$}}
\put(8.3,1){\makebox(0,0){$w$}}
\put(7,0.3){\makebox(0,0){$z$}}
\put(8,0.3){\makebox(0,0){$a$}}
\put(9.3,1){\makebox(0,0){$b$}}
\put(9.3,0){\makebox(0,0){$c$}}
\put(4,0.5){\makebox(0,0){$G$}}
\end{picture}\caption{$G$ is a non-K\"{o}nig-Egerv\'{a}ry graph with
$\mathrm{core}(G)=\ker\left(  G\right)  =\{x,y\}$.}%
\label{fig41}%
\end{figure}

If $A_{1}$, $A_{2}$ are independent sets such that $A_{1}\cup A_{2}$ is
critical, then $A_{1}$ and $A_{2}$ are not necessarily critical. For instance,
consider the graph $G$ from Figure \ref{fig41}, where $A_{1}\cup
A_{2}=\{u,v,x,y\}$ is a critical set, while none of $A_{1}=\{u,x\}$ and
$A_{2}=\{v,y\}$ is critical. The case is different whenever the two
independent sets are also maximum.

\begin{corollary}
\label{cor5}The following assertions are equivalent:

\emph{(i)} $G$ is a K\"{o}nig-Egerv\'{a}ry graph;

\emph{(ii)} for every $S_{1},S_{2}\in\Omega(G)$, the set $S_{1}\cup S_{2}$ is critical;

\emph{(iii)} there exist $S_{1},S_{2}\in\Omega(G)$, such that $S_{1}\cup
S_{2}$ is critical.
\end{corollary}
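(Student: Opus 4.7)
The plan is to establish the cycle of implications (i) $\Rightarrow$ (ii) $\Rightarrow$ (iii) $\Rightarrow$ (i), leaning heavily on Theorem \ref{th8} which has already done most of the work.

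For (i) $\Rightarrow$ (ii): assume $G$ is K\"onig-Egerv\'ary. By Theorem \ref{th5}, every maximum independent set is critical. Then for any $S_{1},S_{2}\in\Omega(G)$, the union $S_{1}\cup S_{2}$ is critical by Theorem \ref{th4}\emph{(ii)} (union of two critical sets is critical).

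The implication (ii) $\Rightarrow$ (iii) is immediate: as $\Omega(G)\neq\emptyset$, just pick any $S\in\Omega(G)$ and take $S_{1}=S_{2}=S$.

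The substantive step is (iii) $\Rightarrow$ (i). The key observation is that $\Lambda=\{S_{1},S_{2}\}$ already satisfies the two hypotheses of Theorem \ref{th8}. First, $\bigcup\Lambda=S_{1}\cup S_{2}$ is critical by assumption. Second, since $|S_{1}|=|S_{2}|=\alpha(G)$, the inclusion-exclusion identity gives
\[
\left\vert \bigcap\Lambda\right\vert +\left\vert \bigcup\Lambda\right\vert =|S_{1}\cap S_{2}|+|S_{1}\cup S_{2}|=|S_{1}|+|S_{2}|=2\alpha(G),
\]
so the second hypothesis of Theorem \ref{th8} is automatically satisfied. Applying Theorem \ref{th8} then yields that $G$ is K\"onig-Egerv\'ary.

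There is essentially no obstacle here: once Theorem \ref{th8} is in hand, the crux of (iii) $\Rightarrow$ (i) reduces to the trivial size identity $|S_{1}\cap S_{2}|+|S_{1}\cup S_{2}|=2\alpha(G)$, which holds because both sets are maximum. It is exactly this automatic cardinality equality — unavailable for a general family $\Lambda\subseteq\Omega(G)$ of size larger than two, and unavailable when $\bigcup\Lambda$ is replaced by $\bigcap\Lambda$ (as the Remark preceding this corollary shows) — that makes the corollary a clean ``two maximum independent sets suffice'' strengthening of Theorem \ref{th8}.
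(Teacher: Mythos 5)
Your proposal is correct and follows essentially the same route as the paper: (i)$\Rightarrow$(ii) via Theorems \ref{th5} and \ref{th4}\emph{(ii)}, and (iii)$\Rightarrow$(i) by applying Theorem \ref{th8} to $\Lambda=\{S_{1},S_{2}\}$ with the automatic identity $\left\vert S_{1}\cap S_{2}\right\vert +\left\vert S_{1}\cup S_{2}\right\vert =2\alpha(G)$. No gaps.
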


\begin{proof}
\emph{(i)}\textit{ }$\Rightarrow$\textit{ }\emph{(ii) }It follows combining
Theorem \ref{th5}\emph{(iii)} and Theorem \ref{th4}\emph{(ii)}.

\emph{(ii)}\textit{ }$\Rightarrow$\textit{ }\emph{(iii) }Clear.

\emph{(iii)}\textit{ }$\Rightarrow$\textit{ }\emph{(i) }It is true according
to Theorem \ref{th8}, because
\[
\left\vert
{\displaystyle\bigcap}
\Lambda\right\vert +\left\vert
{\displaystyle\bigcup}
\Lambda\right\vert =\left\vert S_{1}\cup S_{2}\right\vert +\left\vert
S_{1}\cap S_{2}\right\vert =\left\vert S_{1}\right\vert +\left\vert
S_{2}\right\vert =2\alpha\left(  G\right)
\]
is automatically valid for every family $\Lambda\subseteq\Omega(G)$ with
$\left\vert \Lambda\right\vert =2$.
\end{proof}

\begin{remark}
If $G$ is a K\"{o}nig-Egerv\'{a}ry graph, then $S\cup A$ is not necessarily
critical for every $S\in\Omega(G)$ and $A\in\mathrm{Ind}(G)$. For instance,
consider the graph $G$ in Figure \ref{fig141}, and $S=\{a,b,c,d\}\in\Omega
(G)$. The sets $A_{1}=\{v\}$ and $A_{2}=\{w\}$ are independent, $S\cup A_{1}%
$\ is critical (because $N\left(  S\cup A_{1}\right)  =\{u,v,w,a\}$), while
$S\cup A_{2}$ is not critical (as $N\left(  S\cup A_{2}\right)
=\{u,v,w,c,d\}$).
\end{remark}

\begin{figure}[h]
\setlength{\unitlength}{1cm}\begin{picture}(5,1)\thicklines
\multiput(6,0)(1,0){4}{\circle*{0.29}}
\multiput(6,1)(1,0){3}{\circle*{0.29}}
\put(6,0){\line(1,0){3}}
\put(6,1){\line(1,0){1}}
\put(7,0){\line(0,1){1}}
\put(7,1){\line(1,-1){1}}
\put(8,0){\line(0,1){1}}
\put(5.7,0){\makebox(0,0){$a$}}
\put(5.7,1){\makebox(0,0){$b$}}
\put(6.75,0.2){\makebox(0,0){$v$}}
\put(7.3,){\makebox(0,0){$u$}}
\put(8.3,0.2){\makebox(0,0){$w$}}
\put(8.3,1){\makebox(0,0){$d$}}
\put(9.3,0){\makebox(0,0){$c$}}
\put(4.5,0.5){\makebox(0,0){$G$}}
\end{picture}\caption{$G$ is a K\"{o}nig-Egerv\'{a}ry graph with $d(G)=1$.}%
\label{fig141}%
\end{figure}
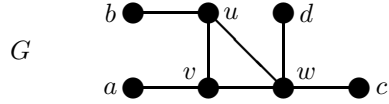

\begin{corollary}
\label{cor555}The following assertions are equivalent:

\emph{(i)} $G$ is a K\"{o}nig-Egerv\'{a}ry graph;

\emph{(ii)} for every $S\in\Omega(G)$ there exists $A\in\mathrm{Ind}(G)$, such
that the set $S\cup A$ is critical;

\emph{(iii)} there are $S\in\Omega(G)$ and $A\in\mathrm{Ind}(G)$, such that
the set $S\cup A$ is critical.
\end{corollary}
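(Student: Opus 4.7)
The plan is to prove the cyclic chain \emph{(i)} $\Rightarrow$ \emph{(ii)} $\Rightarrow$ \emph{(iii)} $\Rightarrow$ \emph{(i)}. The first two implications will be essentially immediate, and all the real work lies in \emph{(iii)} $\Rightarrow$ \emph{(i)}.

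For \emph{(i)} $\Rightarrow$ \emph{(ii)}, if $G$ is K\"{o}nig-Egerv\'{a}ry, then Theorem \ref{th5}\emph{(iii)} tells us every $S\in\Omega(G)$ is itself a critical independent set. Hence the choice $A=S$ (or indeed any $A\subseteq S$) gives $S\cup A=S$, which is critical. The implication \emph{(ii)} $\Rightarrow$ \emph{(iii)} is trivial.

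For \emph{(iii)} $\Rightarrow$ \emph{(i)}, fix $S\in\Omega(G)$ and $A\in\mathrm{Ind}(G)$ with $S\cup A$ critical. First, replacing $A$ by $A\setminus S$ we may assume $A\cap S=\emptyset$; this preserves both $S\cup A$ and independence of $A$. The aim is to show that $S$ itself is critical, whereupon Theorem \ref{th5} finishes the proof. Since $S$ is a maximum independent set, every vertex outside $S$ has a neighbor in $S$, so $N(S)=V(G)\setminus S$, which in turn forces $V(G)\setminus S\subseteq N(S\cup A)$; the remaining contribution to $N(S\cup A)$ consists exactly of the vertices in $S$ adjacent to some vertex of $A$. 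Hence
\[
\left|N(S\cup A)\right|=\left|V(G)\right|-\left|S\right|+\left|N(A)\cap S\right|.
\]
By Theorem \ref{Berge}, the independent set $A$, being disjoint from the maximum independent set $S$, admits a matching into $S$, so $\left|N(A)\cap S\right|\geq\left|A\right|$. Combining,
\[
d(S\cup A)=2\left|S\right|-\left|V(G)\right|+\left|A\right|-\left|N(A)\cap S\right|\leq 2\left|S\right|-\left|V(G)\right|=d(S).
\]
Since $d(S\cup A)=d(G)$ by hypothesis and $d(G)\geq d(S)$ by definition of the critical difference, equality must hold throughout, so $d(S)=d(G)$, i.e.\ $S$ is a critical maximum independent set. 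Theorem \ref{th5}\emph{(ii)} then yields that $G$ is K\"{o}nig-Egerv\'{a}ry.

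The only step requiring any care is the identification of $\left|N(S\cup A)\right|$, which rests on the elementary observation that $N(S)=V(G)\setminus S$ whenever $S$ is a maximum independent set. Theorem \ref{Berge} is the hinge that turns the hypothesis ``$S\cup A$ is critical'' into the matching bound $\left|N(A)\cap S\right|\geq\left|A\right|$, after which the proof is a one-line comparison of $d(S\cup A)$ with $d(S)$. Conceptually, Corollary \ref{cor555} strengthens Corollary \ref{cor5} by only requiring the auxiliary set $A$ to be independent rather than a maximum independent set.
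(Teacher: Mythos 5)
Your proof is correct and follows essentially the same route as the paper's: the same easy implications \emph{(i)}$\Rightarrow$\emph{(ii)}$\Rightarrow$\emph{(iii)}, and for \emph{(iii)}$\Rightarrow$\emph{(i)} the same use of Theorem \ref{Berge} to get $\left\vert N(A)\cap S\right\vert \geq\left\vert A\right\vert$ and conclude $d(S\cup A)\leq d(S)$, hence $d(S)=d(G)$ and Theorem \ref{th5} applies. The only (harmless) differences are that you compute $\left\vert N(S\cup A)\right\vert$ exactly via $N(S)=V(G)\setminus S$ where the paper only bounds it from below, and you handle the reduction to $A\cap S=\emptyset$ a bit more explicitly by passing to $A\setminus S$.
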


\begin{proof}
\emph{(i)}\textit{ }$\Rightarrow$\textit{ }\emph{(ii) }By Theorem
\ref{th5}\emph{(iii)}, we know that every $S\in\Omega(G)$ is critical. Hence,
$S\cup A$ is critical for any $A\subseteq S$.

\emph{(ii)}\textit{ }$\Rightarrow$\textit{ }\emph{(iii) }Clear.

\emph{(iii)}\textit{ }$\Rightarrow$\textit{ }\emph{(i) }If $A\subseteq S$, the
result follows by Theorem \ref{th5}. Otherwise, we can suppose that $S\cap
A=\emptyset$. By Theorem \ref{Berge}, we know that $\left\vert N\left(
A\right)  \cap S\right\vert \geq\left\vert A\right\vert $. Since
\[
\left\vert N\left(  S\cup A\right)  \right\vert \geq\left\vert N\left(
A\right)  \cap S\right\vert +\left\vert N\left(  S\right)  \right\vert
\geq\left\vert A\right\vert +\left\vert N\left(  S\right)  \right\vert ,
\]
we obtain
\begin{align*}
d\left(  G\right)   &  =d\left(  S\cup A\right)  =\left\vert S\cup
A\right\vert -\left\vert N\left(  S\cup A\right)  \right\vert \\
&  \leq\left(  \left\vert S\right\vert +\left\vert A\right\vert \right)
-\left(  \left\vert A\right\vert +\left\vert N\left(  S\right)  \right\vert
\right)  =d(S)\text{.}%
\end{align*}
Therefore, $d\left(  S\right)  =d(G)$, i.e., $S$ is a critical set. According
to Theorem \ref{th5}, $G$ is a K\"{o}nig-Egerv\'{a}ry graph.
\end{proof}

\section{$\mathrm{\ker}\left(  G\right)  $ and $\mathrm{diadem}(G)$ in
K\"{o}nig-Egerv\'{a}ry graphs}

\begin{theorem}
\label{th11}If $G$ is a K\"{o}nig-Egerv\'{a}ry graph, then

\emph{(i) }$\mathrm{diadem}(G)=\mathrm{corona}(G)$, while $\mathrm{diadem}%
(G)\subseteq\mathrm{corona}(G)$ is true for every graph;

\emph{(ii) }$\left\vert \ker\left(  G\right)  \right\vert +\left\vert
\mathrm{diadem}\left(  G\right)  \right\vert \leq2\alpha\left(  G\right)  $.
\end{theorem}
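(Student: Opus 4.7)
The plan is to assemble the statement from the characterizations of König-Egerváry graphs already at hand, together with the known identity $|\mathrm{core}(G)| + |\mathrm{corona}(G)| = 2\alpha(G)$ from Theorem \ref{th9}(ii). There is no real combinatorial obstacle here; the work is identifying which previously-proved fact gives each containment.

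For part (i), I would first handle the universal containment $\mathrm{diadem}(G) \subseteq \mathrm{corona}(G)$. By Theorem \ref{th3}, every critical independent set $A$ is contained in some $S \in \Omega(G)$, so $A \subseteq \mathrm{corona}(G)$; taking the union over all critical independent sets yields $\mathrm{diadem}(G) \subseteq \mathrm{corona}(G)$. For the reverse inclusion in the König-Egerváry case, I would invoke Theorem \ref{th5}\emph{(iii)}: every $S \in \Omega(G)$ is critical, and therefore $S \subseteq \mathrm{diadem}(G)$. Unioning over $\Omega(G)$ gives $\mathrm{corona}(G) \subseteq \mathrm{diadem}(G)$, hence equality.

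For part (ii), I would simply combine part (i), Theorem \ref{th4}\emph{(i)} (which asserts $\ker(G) \subseteq \mathrm{core}(G)$, so $|\ker(G)| \leq |\mathrm{core}(G)|$), and Theorem \ref{th9}\emph{(ii)} (which gives $|\mathrm{corona}(G)| + |\mathrm{core}(G)| = 2\alpha(G)$ for König-Egerváry $G$). Chaining these:
\[
|\ker(G)| + |\mathrm{diadem}(G)| \leq |\mathrm{core}(G)| + |\mathrm{corona}(G)| = 2\alpha(G).
\]

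The one subtle point worth checking is that the direction $\mathrm{diadem}(G) \subseteq \mathrm{corona}(G)$ really is valid for \emph{every} graph, not just König-Egerváry ones; this is exactly what Theorem \ref{th3} buys us, so no extra hypothesis is used there. Beyond that, the proof is a short composition of earlier results, with no case analysis or new matching argument required.
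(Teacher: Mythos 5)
Your proposal is correct and follows essentially the same route as the paper: Theorem \ref{th3} for the universal inclusion $\mathrm{diadem}(G)\subseteq\mathrm{corona}(G)$, Theorem \ref{th5} for the reverse inclusion in the K\"{o}nig-Egerv\'{a}ry case, and the combination of part \emph{(i)} with Theorem \ref{th4}\emph{(i)} and Theorem \ref{th9}\emph{(ii)} for part \emph{(ii)}. You merely spell out the final chaining of inequalities more explicitly than the paper does.
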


\begin{proof}
\emph{(i)} Every $S\in\Omega\left(  G\right)  $ is a critical set, by Theorem
\ref{th5}. Hence we deduce that $\mathrm{corona}(G)\subseteq\mathrm{diadem}%
(G)$. On the other hand, for every graph each critical independent set is
included in a maximum independent set, according to Theorem \ref{th3}. Thus,
we infer that $\mathrm{diadem}(G)\subseteq\mathrm{corona}(G)$. Consequently,
the equality $\mathrm{diadem}(G)=\mathrm{corona}(G)$ holds.

\emph{(ii)} It follows by combining Part \emph{(i)}, Theorem \ref{th9}%
\emph{(ii)} and Theorem \ref{th4}\emph{(i)}.
\end{proof}

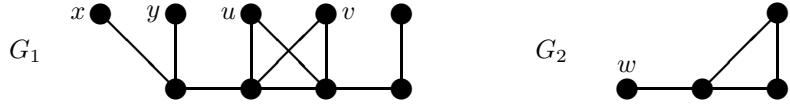
\begin{figure}[h]
\setlength{\unitlength}{1cm}\begin{picture}(5,1.3)\thicklines
\multiput(4,0)(1,0){4}{\circle*{0.29}}
\multiput(3,1)(1,0){5}{\circle*{0.29}}
\put(4,0){\line(1,0){3}}
\put(4,0){\line(0,1){1}}
\put(3,1){\line(1,-1){1}}
\put(5,0){\line(0,1){1}}
\put(5,0){\line(1,1){1}}
\put(5,1){\line(1,-1){1}}
\put(6,0){\line(0,1){1}}
\put(7,0){\line(0,1){1}}
\put(2.7,1){\makebox(0,0){$x$}}
\put(3.7,1){\makebox(0,0){$y$}}
\put(4.7,1){\makebox(0,0){$u$}}
\put(6.3,1){\makebox(0,0){$v$}}
\put(2,0.5){\makebox(0,0){$G_{1}$}}
\multiput(10,0)(1,0){2}{\circle*{0.29}}
\multiput(12,0)(0,1){2}{\circle*{0.29}}
\put(10,0){\line(1,0){2}}
\put(11,0){\line(1,1){1}}
\put(12,0){\line(0,1){1}}
\put(10,0.3){\makebox(0,0){$w$}}
\put(9,0.5){\makebox(0,0){$G_{2}$}}
\end{picture}\caption{$G_{1}$ and $G_{2}$\ are K\"{o}nig-Egerv\'{a}ry graphs.
$ker(G_{1})=\{x,y\}$ and $ker(G_{2})=\emptyset$.}%
\label{fig222}%
\end{figure}

The K\"{o}nig-Egerv\'{a}ry graphs from Figure \ref{fig222} satisfy $\left\vert
\ker\left(  G\right)  \right\vert +\left\vert \text{\textrm{diadem}}\left(
G\right)  \right\vert <2\alpha\left(  G\right)  $.

The graph $G_{1}$ from Figure \ref{fig178} is a non-bipartite
K\"{o}nig-Egerv\'{a}ry graph, such that $\mathrm{\ker}(G_{1})=\mathrm{core}%
(G_{1})$ and \textrm{diadem}$\left(  G_{1}\right)  =\mathrm{corona}(G_{1})$.
The combination of $\mathrm{diadem}(G)\varsubsetneqq\mathrm{corona}(G)$ and
$\mathrm{\ker}(G)=\mathrm{core}(G)$ is realized by the
non-K\"{o}nig-Egerv\'{a}ry graph $G_{2}$ from Figure \ref{fig178}, because
$\mathrm{\ker}(G_{2})=\mathrm{core}(G_{2})$ and \textrm{diadem}$\left(
G_{2}\right)  \cup\{z,t,v,w\}=\mathrm{corona}(G_{2})$.

\begin{figure}[h]
\setlength{\unitlength}{1cm}\begin{picture}(5,1.2)\thicklines
\multiput(2,1)(1,0){6}{\circle*{0.29}}
\multiput(4,0)(1,0){4}{\circle*{0.29}}
\put(2,0){\circle*{0.29}}
\put(2,0){\line(1,0){5}}
\put(2,1){\line(2,-1){2}}
\put(3,1){\line(1,-1){1}}
\put(3,1){\line(1,0){1}}
\put(4,0){\line(0,1){1}}
\put(4,0){\line(1,1){1}}
\put(5,1){\line(1,0){2}}
\put(6,0){\line(0,1){1}}
\put(1.7,0){\makebox(0,0){$a$}}
\put(1.7,1){\makebox(0,0){$b$}}
\put(1,0.5){\makebox(0,0){$G_{1}$}}
\multiput(9,0)(1,0){5}{\circle*{0.29}}
\multiput(10,1)(1,0){2}{\circle*{0.29}}
\put(13,1){\circle*{0.29}}
\put(9,0){\line(1,0){4}}
\put(10,0){\line(0,1){1}}
\put(11,0){\line(0,1){1}}
\put(11,1){\line(1,-1){1}}
\put(13,0){\line(0,1){1}}
\put(9,0.3){\makebox(0,0){$x$}}
\put(10.3,1){\makebox(0,0){$y$}}
\put(11.3,1){\makebox(0,0){$t$}}
\put(12.7,1){\makebox(0,0){$u$}}
\put(10.7,0.3){\makebox(0,0){$z$}}
\put(12,0.3){\makebox(0,0){$v$}}
\put(12.7,0.3){\makebox(0,0){$w$}}
\put(8.2,0.5){\makebox(0,0){$G_{2}$}}
\end{picture}\caption{$\mathrm{core}(G_{1})=\{a,b\}$ and $\mathrm{core}%
(G_{2})=\{x,y\}$.}%
\label{fig178}%
\end{figure}

The graph $G_{2}$ from Figure \ref{fig101101} has : $\mathrm{corona}%
(G_{2})=V\left(  G_{2}\right)  -\{v_{1},v_{6}\}$ and $N(\mathrm{corona}%
(G_{2}))=V\left(  G_{2}\right)  -\{x,y,z\}$. Hence, $d(\mathrm{corona}%
(G_{2}))=1=d(G_{2})$, i.e., $\mathrm{corona}(G_{2})$ is a critical set, while
$\mathrm{diadem}(G_{2})=\{x,y,z,v_{2},v_{3},v_{5}\}\varsubsetneq
\mathrm{corona}(G_{2})$. Thus, the graph $G_{2}$ from Figure \ref{fig101101}
shows that it is possible for a graph to have $\mathrm{diadem}%
(G)\varsubsetneqq\mathrm{corona}(G)$ and $\mathrm{\ker}(G)\varsubsetneqq
\mathrm{core}(G)$. On the other hand, the graph $G_{2}$ from Figure
\ref{fig101101} gives an example where not every critical set is a subset of
$\mathrm{diadem}(G)$.

\begin{corollary}
\label{cor4}The following assertions are equivalent:

\emph{(i)} $G$ is a K\"{o}nig-Egerv\'{a}ry graph;

\emph{(ii)} $\mathrm{diadem}(G)=\mathrm{corona}(G)$ and $\left\vert
\mathrm{core}(G)\right\vert +\left\vert \mathrm{corona}(G)\right\vert
=2\alpha\left(  G\right)  $;

\emph{(iii)} $\mathrm{corona}(G)$ is a critical set and $\left\vert
\mathrm{core}(G)\right\vert +\left\vert \mathrm{corona}(G)\right\vert
=2\alpha\left(  G\right)  $.
\end{corollary}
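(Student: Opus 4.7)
The plan is to establish the cyclic chain of implications (i) $\Rightarrow$ (ii) $\Rightarrow$ (iii) $\Rightarrow$ (i), each step invoking one or two results already proven in the paper. All three implications are short once the earlier machinery is in place; the hard analytic content has been pushed into Theorems \ref{th8}, \ref{th9}, \ref{th11}, and Corollary \ref{cor3}.

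For (i) $\Rightarrow$ (ii), I would simply quote Theorem \ref{th11}\emph{(i)} to obtain the set equality $\mathrm{diadem}(G)=\mathrm{corona}(G)$, and Theorem \ref{th9}\emph{(ii)} to obtain the size identity $|\mathrm{corona}(G)|+|\mathrm{core}(G)|=2\alpha(G)$. For (ii) $\Rightarrow$ (iii), the key observation is that $\mathrm{diadem}(G)$ is critical for \emph{every} graph by Corollary \ref{cor3}; the hypothesis $\mathrm{diadem}(G)=\mathrm{corona}(G)$ then transfers this property to $\mathrm{corona}(G)$, and the cardinality equality is carried along unchanged.

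For (iii) $\Rightarrow$ (i), which is the only genuinely load-bearing step, I would apply Theorem \ref{th8} with the choice $\Lambda=\Omega(G)$. Under this choice $\bigcup \Lambda=\mathrm{corona}(G)$ and $\bigcap \Lambda=\mathrm{core}(G)$, so the two hypotheses of Theorem \ref{th8}, namely that $\bigcup \Lambda$ is critical and that $|\bigcap \Lambda|+|\bigcup \Lambda|=2\alpha(G)$, are \emph{exactly} the two assumptions packaged into (iii). Theorem \ref{th8} then delivers the König-Egerváry conclusion directly.

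The main obstacle is essentially bookkeeping: one must verify that the size/condition hypothesis of Theorem \ref{th8} ($|\Lambda|\geq 1$) holds, which is trivial since $\Omega(G)\neq\emptyset$, and that the identification $\bigcap \Omega(G)=\mathrm{core}(G)$, $\bigcup \Omega(G)=\mathrm{corona}(G)$ is just the definition. Beyond that, there is no substantive calculation; the proof is a three-line chain of citations.
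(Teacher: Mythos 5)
Your proposal is correct and follows essentially the same route as the paper's own proof: the paper likewise derives \emph{(i)}$\Rightarrow$\emph{(ii)} from Theorem \ref{th11}\emph{(i)} and Theorem \ref{th9}\emph{(ii)}, obtains \emph{(ii)}$\Rightarrow$\emph{(iii)} from Corollary \ref{cor3}, and proves \emph{(iii)}$\Rightarrow$\emph{(i)} by taking $\Lambda=\Omega(G)$ in Theorem \ref{th8}. No discrepancies to report.
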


\begin{proof}
\emph{(i)}\textit{ }$\Rightarrow$\textit{ }\emph{(ii) }It is true, by applying
Theorem \ref{th11}\emph{(i)} and Theorem \ref{th9}\emph{(ii)}.

\emph{(ii)}\textit{ }$\Rightarrow$\textit{ }\emph{(iii) }It follows from
Corollary \ref{cor3}.

\emph{(iii)}\textit{ }$\Rightarrow$\textit{ }\emph{(i) }Take $\Lambda
=\Omega(G)$ and use Theorem \ref{th8}.
\end{proof}

Notice that the graph $G_{1}$ from Figure \ref{fig101101} satisfies
$\left\vert \mathrm{core}(G_{1})\right\vert +\left\vert \mathrm{corona}%
(G_{1})\right\vert =2\alpha\left(  G_{1}\right)  $, while $d(\mathrm{corona}%
(G_{1}))=0<d(G_{1})=1$, i.e., $\mathrm{corona}(G_{1})$ is not a critical set,
because $\mathrm{corona}(G_{1})=V\left(  G_{1}\right)  -\{c,d\}$ and
$N(\mathrm{corona}(G_{1}))=V\left(  G_{1}\right)  -\{a,b\}$. On the other
hand, the graph $G_{2}$ from Figure \ref{fig101101} satisfies $\left\vert
\mathrm{core}(G_{2})\right\vert +\left\vert \mathrm{corona}(G_{2})\right\vert
=13>12=2\alpha\left(  G_{2}\right)  $, while $\mathrm{corona}(G_{2})$ is a
critical set.

\section{Conclusions}

In this paper we focus on interconnections between critical unions and
intersections of maximum independent sets, with emphasis on
K\"{o}nig-Egerv\'{a}ry graphs. In \cite{LevManLemma2011} we showed that
$2\alpha\left(  G\right)  \leq\left\vert \text{\textrm{core}}\left(  G\right)
\right\vert +\left\vert \text{\textrm{corona}}\left(  G\right)  \right\vert $
is true for every graph, while the equality \textrm{diadem}$\left(  G\right)
=\mathrm{corona}(G)$ holds for each K\"{o}nig-Egerv\'{a}ry graph $G$, by
Theorem \ref{th11}\emph{(i)}. According to Theorem \ref{th4}\emph{(i)},
$\mathrm{\ker}(G)\subseteq\mathrm{core}(G)$ for every graph. On the other
hand, Theorem \ref{th3}\emph{ }implies the inclusion \textrm{diadem}$\left(
G\right)  \subseteq\mathrm{corona}(G)$. Hence%

\[
\left\vert \ker\left(  G\right)  \right\vert +\left\vert \text{\textrm{diadem}%
}\left(  G\right)  \right\vert \leq\left\vert \text{\textrm{core}}\left(
G\right)  \right\vert +\left\vert \text{\textrm{corona}}\left(  G\right)
\right\vert
\]
for each graph $G$. These remarks together with Theorem \ref{th11}\emph{(ii)}
motivate the following.

\begin{conjecture}
$\left\vert \ker\left(  G\right)  \right\vert +\left\vert
\text{\textrm{diadem}}\left(  G\right)  \right\vert \leq2\alpha\left(
G\right)  $ is true for every graph $G$.
\end{conjecture}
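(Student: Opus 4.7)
The conjecture is already established for K\"onig--Egerv\'ary graphs by Theorem \ref{th11}(ii), so the plan is to reduce the general case to this one via a Larson-style decomposition. Fix a maximum critical independent set $J$ of $G$ (which exists by Theorem \ref{th3} together with \cite{Larson2007}) and set $H=G[N[J]]$. By the structure theorem of \cite{Larson2011}, $H$ is K\"onig--Egerv\'ary with $\alpha(H)=|J|$ and $\mu(H)=|N(J)|$, and $\alpha(G)=|J|+\alpha(G-N[J])$; in particular, Lemma \ref{lem2} gives $d(H)=\alpha(H)-\mu(H)=d(G)$.

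The first step is to verify $\ker(G)=\ker(H)$. Both sit inside $J$ by Theorem \ref{th4}(i) --- using that $J\in\Omega(H)$ since $|J|=\alpha(H)$ --- and any subset $I\subseteq J$ satisfies $N_G(I)\subseteq N(J)\subseteq V(H)$, so $N_G(I)=N_H(I)$ and $d_G(I)=d_H(I)$, whence critical-independence coincides in $G$ and $H$ on subsets of $J$. The second, and key technical, claim is the inclusion $\mathrm{diadem}(G)\subseteq V(H)=N[J]$. Granting it, every critical independent set $A$ of $G$ lies in $V(H)$, is independent there, and so extends to a maximum independent set of $H$; since $H$ is K\"onig--Egerv\'ary, that maximum independent set is critical in $H$ by Theorem \ref{th5}(iii), hence $A\subseteq\mathrm{diadem}(H)$. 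Thus $\mathrm{diadem}(G)\subseteq\mathrm{diadem}(H)$, and Theorem \ref{th11}(ii) applied to $H$ closes the argument:
\[
|\ker(G)|+|\mathrm{diadem}(G)|\le|\ker(H)|+|\mathrm{diadem}(H)|\le 2\alpha(H)=2|J|\le 2\alpha(G).
\]

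The proof of $\mathrm{diadem}(G)\subseteq N[J]$ is the main obstacle. Suppose toward a contradiction that $v\in\mathrm{diadem}(G)\setminus N[J]$, and pick a maximum critical independent set $J'$ with $v\in J'$, so $|J'|=|J|$. Since $v\notin N[J]$ the set $J\cup\{v\}$ is independent, and the maximality of $J$ forces $d(J\cup\{v\})<d(G)$, which rearranges to $|N(v)\setminus N(J)|\ge 2$. On the other hand, Theorem \ref{th4}(ii) ensures that $J\cup J'$ is critical, and the computation $|N(J\cup J')|=|J\cup J'|-d(G)=|J|+|J'\setminus J|-d(G)=|N(J)|+|J'\setminus J|$ yields the sharp identity $|N(J\cup J')\setminus N(J)|=|J'\setminus J|$, so the total fresh neighborhood produced by $J'\setminus J$ equals $|J'\setminus J|$. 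When $|J'\setminus J|=1$ the vertex $v$ is the sole contributor and the identity contradicts $|N(v)\setminus N(J)|\ge 2$ immediately. For $|J'\setminus J|\ge 2$, the plan is to combine this identity with the matching guaranteed by Theorem \ref{th2} for the critical independent set $J'$, and exchange $v$ against a suitable subset of $J\setminus J'$, thereby building a critical independent set strictly larger than $J$ and contradicting its maximality. This Hall-type swap inside the bipartite graph $G[(J\setminus J')\cup(J'\setminus J)]$ is the step I expect to be the most delicate.
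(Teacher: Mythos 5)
First, a point of context: the statement you were asked to prove is stated in the paper as a \emph{conjecture}; the authors prove it only for K\"{o}nig--Egerv\'{a}ry graphs (Theorem \ref{th11}\emph{(ii)}) and explicitly leave the general case open, so there is no proof in the paper to compare yours against. Your reduction strategy is a sensible one --- and indeed the natural one: fix a maximum critical independent set $J$, pass to the K\"{o}nig--Egerv\'{a}ry graph $H=G[N[J]]$ supplied by Larson's decomposition, check $\ker(G)=\ker(H)$ and $\mathrm{diadem}(G)\subseteq\mathrm{diadem}(H)$, and invoke Theorem \ref{th11}\emph{(ii)} for $H$. The pieces you do complete are sound: $d_G$ and $d_H$ agree on subsets of $J$, $d(H)=d(G)$ via Lemma \ref{lem2}, $\ker(G)=\ker(H)$ follows from Theorem \ref{th4}\emph{(iii)} applied in both graphs, and the final chain of inequalities is correct once the inclusion $\mathrm{diadem}(G)\subseteq N[J]$ is available.

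The genuine gap is exactly where you flag it: the proof of $\mathrm{diadem}(G)\subseteq N[J]$ is not finished, and it is the entire substance of the argument --- everything else is bookkeeping. You correctly derive $\left\vert N(v)\setminus N(J)\right\vert \geq2$ for $v\in J'\setminus N[J]$ and the identity $\left\vert N(J\cup J')\setminus N(J)\right\vert =\left\vert J'\setminus J\right\vert$ from Theorem \ref{th4}\emph{(ii)}, which kills the case $\left\vert J'\setminus J\right\vert =1$; but for $\left\vert J'\setminus J\right\vert \geq2$ you offer only a ``plan'' for a Hall-type exchange, and the obstruction is real: the vertices of $J'\cap N(J)$ can contribute \emph{fewer} fresh neighbours than their number, absorbing the surplus created by $v$, so the counting identity alone yields no contradiction, and the set $J\cup(J'\setminus N[J])$ need not be critical once $J'\cap N(J)\neq\emptyset$. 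What you actually need here is the uniqueness part of Larson's independence decomposition --- that $N[J]$ is the same for every maximum critical independent set $J$ (so that, via Theorem \ref{th3} and the fact that every critical independent set extends to a maximum critical one, $\mathrm{diadem}(G)\subseteq N[J]$). That statement is a theorem in the cited Larson reference, but it is not among the results quoted in this paper, and your attempted elementary substitute for it does not go through as written. Until that step is either proved or properly imported, the argument is incomplete.
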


When it is proved one can conclude that the following inequalities:
\[
\left\vert \ker\left(  G\right)  \right\vert +\left\vert \text{\textrm{diadem}%
}\left(  G\right)  \right\vert \leq2\alpha\left(  G\right)  \leq\left\vert
\text{\textrm{core}}\left(  G\right)  \right\vert +\left\vert
\text{\textrm{corona}}\left(  G\right)  \right\vert
\]
hold for every graph $G$.

Theorem \ref{th11} claims that $\mathrm{diadem}(G)=\mathrm{corona}(G)$ is a
necessary condition for $G$ to be a K\"{o}nig-Egerv\'{a}ry graph, while
Corollary \ref{cor4} shows that, apparently, this equality is not enough.
These facts motivate the following.

\begin{conjecture}
If\emph{ }$\mathrm{diadem}(G)=\mathrm{corona}(G)$, then $G$ is a
K\"{o}nig-Egerv\'{a}ry graph.
\end{conjecture}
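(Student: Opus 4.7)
The plan is to prove that, under the hypothesis, every maximum independent set $S\in\Omega(G)$ is itself a critical independent set, whence $G$ will be a K\"{o}nig-Egerv\'{a}ry graph by Theorem \ref{th5}. Fix $S\in\Omega(G)$ and consider the set
\[
\widehat{A}_{S}=\bigcup\{A:A\text{ is a critical independent set with }A\subseteq S\}.
\]
Theorem \ref{th4}\emph{(ii)} ensures that $\widehat{A}_{S}$ is critical, and since it is contained in the independent set $S$ it is itself independent; thus $\widehat{A}_{S}$ is a critical independent set contained in $S$. I would try to show $\widehat{A}_{S}=S$, from which the conclusion follows.

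Suppose, for a contradiction, that some $v\in S-\widehat{A}_{S}$ exists. Since $v\in S\subseteq\mathrm{corona}(G)=\mathrm{diadem}(G)$, there is a critical independent set $B$ with $v\in B$, and because $v\notin\widehat{A}_{S}$ we must have $B\not\subseteq S$, hence $B-S\neq\emptyset$. The proof then reduces to establishing the following key claim: \emph{whenever $B$ is a critical independent set and $S\in\Omega(G)$, there exists a critical independent set $B^{\prime}\subseteq S$ with $B\cap S\subseteq B^{\prime}$}. Granting this, applied to the present $B$ and $S$ we obtain $B^{\prime}\subseteq S$ critical independent with $v\in B\cap S\subseteq B^{\prime}$, contradicting $v\notin\widehat{A}_{S}$ and finishing the proof.

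The natural candidate for $B^{\prime}$ is $B^{\prime}=(B\cap S)\cup M(B-S)$, where $M$ is the matching from $B-S$ into $S-B$ furnished by the Matching Lemma (Lemma \ref{MatchLem}) applied with $A=B$ and $\Lambda=\{S\}$. Then $B^{\prime}\subseteq S$ is independent, $\left\vert B^{\prime}\right\vert=\left\vert B\right\vert$, and $B\cap S\subseteq B^{\prime}$, so it remains to verify $\left\vert N(B^{\prime})\right\vert\leq\left\vert N(B)\right\vert$, equivalently $d(B^{\prime})=d(B)=d(G)$. A direct comparison yields $N(B^{\prime})-N(B)\supseteq B-S$ (each $b\in B-S$ is adjacent to $M(b)\in B^{\prime}$ and, by independence of $B$, lies outside $N(B)$), while $N(B)-N(B^{\prime})\subseteq N(B-S)-N(B\cap S)$. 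The main obstacle is to show these contributions cancel. In general a simple swap can fail to preserve criticality, so I expect the hypothesis $\mathrm{diadem}(G)=\mathrm{corona}(G)$ must be exploited more deeply---for instance, by iterating the swap, by choosing $B$ to be a \emph{maximum} critical independent set containing $v$, or by combining $M$ with the matching from $N(\mathrm{core}(G))=V(G)-\mathrm{corona}(G)$ into $\mathrm{core}(G)$ guaranteed by Corollary \ref{cor1} and Theorem \ref{th2}. Overcoming this balance of neighborhoods is the principal technical challenge of the proof.
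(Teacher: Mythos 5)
This statement is posed in the paper as an open conjecture; the authors give no proof of it, so there is nothing to compare your argument against --- and, as written, your proposal does not close the question either. Your reduction is sound as far as it goes: $\widehat{A}_{S}$ is indeed a critical independent subset of $S$ by Theorem \ref{th4}\emph{(ii)}, and if every vertex of $S$ lay in some critical independent subset of $S$, then $S$ itself would be critical and Theorem \ref{th5} would finish the proof. But the hypothesis $\mathrm{diadem}(G)=\mathrm{corona}(G)$ only guarantees that each $v\in S$ lies in \emph{some} critical independent set $B$, not in one contained in $S$, and bridging that difference is exactly your unproven ``key claim.''

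That key claim is where all the difficulty of the conjecture lives, and you have not established it. Your candidate $B'=(B\cap S)\cup M(B-S)$ does satisfy $\left\vert B'\right\vert =\left\vert B\right\vert$ and $B\cap S\subseteq B'\subseteq S$, but nothing forces $\left\vert N(B')\right\vert \leq\left\vert N(B)\right\vert$: a vertex $b\in B-S$ may have many neighbors adjacent to no other vertex of $B$ and to no vertex of $M(B-S)$, in which case the swap strictly decreases the difference and $B'$ is not critical. You acknowledge this obstacle and list possible repairs (iterating the swap, choosing $B$ maximum, invoking Theorem \ref{th2} or Corollary \ref{cor1}), but none is carried out, and it is not evident that any of them succeeds; note also that the hypothesis $\mathrm{diadem}(G)=\mathrm{corona}(G)$ enters your argument only once, at the very first step, whereas a correct proof would presumably have to exploit it again in the neighborhood-balancing step. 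As it stands, the proposal is a plausible strategy with its central lemma missing, not a proof.
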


The graphs in Figure \ref{fig1} are non-K\"{o}nig-Egerv\'{a}ry graphs;
$\mathrm{core}(G_{1})=\{a,b,c,d\}$ and it is a critical set, while
$\mathrm{core}(G_{2})=\{x,y,z,w\}$ and it is not critical.

\begin{figure}[h]
\setlength{\unitlength}{1cm}\begin{picture}(5,1.8)\thicklines
\multiput(1,0.5)(1,0){6}{\circle*{0.29}}
\multiput(1,1.5)(1,0){5}{\circle*{0.29}}
\put(1,0.5){\line(1,0){5}}
\put(1,1.5){\line(1,-1){1}}
\put(2,0.5){\line(0,1){1}}
\put(3,1.5){\line(1,-1){1}}
\put(3,1.5){\line(1,0){1}}
\put(4,0.5){\line(0,1){1}}
\put(5,1.5){\line(1,-1){1}}
\put(5,0.5){\line(0,1){1}}
\put(0.7,1.5){\makebox(0,0){$a$}}
\put(0.7,0.5){\makebox(0,0){$b$}}
\put(2.3,1.5){\makebox(0,0){$c$}}
\put(3,0.85){\makebox(0,0){$d$}}
\put(3,0){\makebox(0,0){$G_{1}$}}
\multiput(7,0.5)(1,0){7}{\circle*{0.29}}
\multiput(7,1.5)(1,0){6}{\circle*{0.29}}
\put(7,0.5){\line(1,0){6}}
\put(7,1.5){\line(1,-1){1}}
\put(8,0.5){\line(0,1){1}}
\put(9,0.5){\line(0,1){1}}
\put(9,0.5){\line(1,1){1}}
\put(9,1.5){\line(1,0){1}}
\put(11,0.5){\line(0,1){1}}
\put(12,1.5){\line(1,-1){1}}
\put(11,1.5){\line(1,0){1}}
\put(6.7,1.5){\makebox(0,0){$x$}}
\put(6.7,0.5){\makebox(0,0){$y$}}
\put(8.3,1.5){\makebox(0,0){$z$}}
\put(10,0.8){\makebox(0,0){$w$}}
\put(10,0){\makebox(0,0){$G_{2}$}}
\end{picture}\caption{Both $G_{1}$ and $G_{2}$\ are non-K\"{o}nig-Egerv\'{a}ry
graphs.}%
\label{fig1}%
\end{figure}
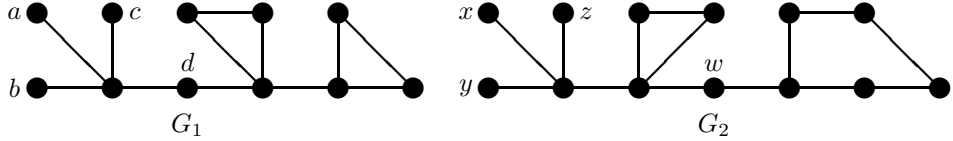By Corollary \ref{cor2}, $\mathrm{core}(G)$ is a critical set for
every K\"{o}nig-Egerv\'{a}ry graph. It justifies the following.

\begin{problem}
Characterize graphs such that $\mathrm{core}(G)$ is a critical set.
\end{problem}

It is known that the sets $\mathrm{\ker}(G)$ and $\mathrm{core}(G)$ coincide
for bipartite graphs \cite{LevMan2011b}. Notice that there are non-bipartite
graphs enjoying the equality $\mathrm{\ker}(G)=\mathrm{core}(G)$; e.g., the
graphs from Figure \ref{fig14}, where only $G_{1}$ is a K\"{o}nig-Egerv\'{a}ry
graph. \begin{figure}[h]
\setlength{\unitlength}{1cm}\begin{picture}(5,1.2)\thicklines
\multiput(2,0)(1,0){4}{\circle*{0.29}}
\multiput(3,1)(1,0){3}{\circle*{0.29}}
\put(2,0){\line(1,0){3}}
\put(3,0){\line(0,1){1}}
\put(5,0){\line(0,1){1}}
\put(4,1){\line(1,0){1}}
\put(3,0){\line(1,1){1}}
\put(1.7,0){\makebox(0,0){$x$}}
\put(2.7,1){\makebox(0,0){$y$}}
\put(1,0.5){\makebox(0,0){$G_{1}$}}
\multiput(8,0)(1,0){5}{\circle*{0.29}}
\multiput(9,1)(1,0){3}{\circle*{0.29}}
\put(8,0){\line(1,0){4}}
\put(9,0){\line(0,1){1}}
\put(10,0){\line(0,1){1}}
\put(10,1){\line(1,0){1}}
\put(11,1){\line(1,-1){1}}
\put(7.7,0){\makebox(0,0){$a$}}
\put(8.7,1){\makebox(0,0){$b$}}
\put(7,0.5){\makebox(0,0){$G_{2}$}}
\end{picture}\caption{$\mathrm{core}(G_{1})=\ker\left(  G_{1}\right)
=\{x,y\}$ and $\mathrm{core}(G_{2})=\ker\left(  G_{2}\right)  =\{a,b\}$.}%
\label{fig14}%
\end{figure}

There is a non-bipartite K\"{o}nig-Egerv\'{a}ry graph $G$, such that
$\mathrm{\ker}(G)\neq\mathrm{core}(G)$. For instance, the graph $G_{1}$ from
Figure \ref{fig222} has $\mathrm{\ker}(G_{1})=\left\{  x,y\right\}  $, while
$\mathrm{core}(G_{1})=\left\{  x,y,u,v\right\}  $. The graph $G_{2}$ from
Figure \ref{fig222} has $\mathrm{\ker}(G_{2})=\emptyset$, while $\mathrm{core}%
(G_{2})=\left\{  w\right\}  $. We propose the following.

\begin{problem}
Characterize (K\"{o}nig-Egerv\'{a}ry) graphs satisfying $\ker\left(  G\right)
=\mathrm{core}(G)$.
\end{problem}

\end{document}